\newtheorem{thm}{Theorem}
\newtheorem{prop}[thm]{Proposition}
\begin{document}

\preprint{APS/123-QED}

\title{Self-consistent multiple complex-kink solutions in \\ Bogoliubov--de~Gennes and chiral Gross--Neveu systems}
\author{Daisuke A. Takahashi}\email{takahashi@vortex.c.u-tokyo.ac.jp}
\affiliation{Department of Basic Science, The University of Tokyo, Tokyo 153-8902, Japan}
\affiliation{Research and Education Center for Natural Sciences, Keio University, Hiyoshi 4-1-1, Yokohama, Kanagawa 223-8521, Japan}
\author{Muneto Nitta}
\affiliation{Research and Education Center for Natural Sciences, Keio University, Hiyoshi 4-1-1, Yokohama, Kanagawa 223-8521, Japan}
\affiliation{Department of Physics, Keio University, Hiyoshi 4-1-1, Yokohama, Kanagawa 223-8521, Japan}

\date{\today}

\begin{abstract}
	We exhaust all exact self-consistent solutions of complex-valued fermionic condensates 
in the 1+1 dimensional Bogoliubov--de Gennes and chiral Gross--Neveu systems 
under uniform boundary conditions. 
We obtain $n$ complex (twisted) kinks, or grey solitons, 
with $2n$ parameters corresponding to their positions and phase shifts. 
Each soliton can be placed at an arbitrary position while the self-consistency 
requires its phase shift to be quantized by $\pi/N$ for $N$ flavors.
\end{abstract}

\pacs{11.10.Kk, 03.75.Ss, 67.85.-d, 74.20.-z}
\maketitle

\textit{Introduction.---}%
The search for inhomogeneous self-consistent fermionic condensates 
including states such as the Fulde--Ferrell (FF) \cite{Fulde:1964zz} and 
Larkin--Ovchinnikov (LO) \cite{larkin:1964zz} states 
having phase and amplitude modulations, respectively, in superconductors 
has attracted considerable attentions for more than half a century 
since theoretical predictions were made about their existence. 
While amplitude modulations are well studied in conducting polymers 
\cite{Brazovskii1,Horovitz:1981,Mertsching:1981,Brazovskii2,Brazovskii3},  
the FFLO state is theoretically shown to be a ground state of superconductors 
under a magnetic field \cite{Machida:1984zz}.
Recently, the FFLO state has also been discussed in the context of 
an ultracold atomic Fermi gas \cite{Radzihovsky:2010,Radzihovsky:2011}. 
In general, inhomogeneous self-consistent fermionic condensates 
with a gap function and fermionic excitations 
can be treated simultaneously using 
the Bogoliubov--de Gennes (BdG) and gap equations \cite{DeGennes:1999}. 
The gap functions are real and complex 
for conducting polymers \cite{Heeger:1988zz} and superconductors, 
respectively.
In the quantum field theory, these systems correspond to 
the Gross--Neveu (GN) model \cite{Gross:1974jv} and 
the Nambu--Jona-Lasinio (or chiral GN) model \cite{Nambu:1961tp}, 
which were proposed as models of dynamical chiral symmetry breaking 
in 1+1 or 2+1 dimensions. 
Therefore,  
BdG and (chiral) GN systems 
have been studied and developed together from the viewpoint of both condensed matter physics
 and high energy physics (see Ref.~\cite{Thies:2006ti} for a review). 
For instance, fermion number fractionization 
is one of the topics that has been studied from this viewpoint \cite{Jackiw:1975fn,Niemi:1984vz}.
Recently, it has been shown that 
the solutions in 1+1 dimensions can be  
promoted to 3+1 dimensions \cite{Nickel:2009wj,Hofmann:2010gc}, 
thereby leading to extensive study of the modulated phases of these systems in terms of quantum chromodynamics (QCD) \cite{Casalbuoni:2003wh}.

\indent Inhomogeneous self-consistent solutions are often studied numerically 
because analytic solutions are generally difficult to obtain. 
However, several analytic solutions are available in the case of the real-valued condensates in 1+1 dimensions, 
which describe the conducting polymers and the real GN model. 
Under uniform boundary conditions at spatial infinities, 
a real kink was constructed by Dashen {\it et.~al.} \cite{Dashen:1975xh} 
by using the inverse scattering method, and later, it was 
reconstructed in polyacetylene \cite{Takayama:1980zz} 
in the continuum limit of the lattice model \cite{Su:1979ua}.
Subsequently, a bound state of a kink and an anti-kink, 
which is called a polaron, 
was constructed in polyacetylene \cite{Campbell:1981,Campbell:1981dc}, 
for which achieving self-consistency in the system requires the distance between 
the kink and anti-kink to be fixed. Furthermore, 
three kinks (kink and polaron placed at arbitrary positions) \cite{OkunoOnodera,Feinberg:2002nq} 
and more general solutions \cite{Feinberg:2003qz} were obtained.
The attractive interaction between two polarons was also investigated \cite{OkunoOnodera2}.
For a periodic boundary condition, 
the existence of real kink crystals (the LO state) has been known for a long time 
\cite{Brazovskii1,Horovitz:1981,Mertsching:1981,Brazovskii2,Brazovskii3}.

\indent On the other hand, 
when compared with real condensates, 
only a few self-consistent solutions have thus far been obtained
for complex condensates, 
 such as a complex (or twisted) kink or a grey soliton, \cite{Shei:1976mn} 
and their crystals  \cite{Basar:2008im,Basar:2008ki}. 
In these complex-valued crystals, both the amplitude and phase are modulated (the FFLO state), 
and this modulated phase has important applications in both 
superconductors and QCD, such as 
in the phase diagram of the chiral GN model \cite{Basar:2009fg}.
An attempt to construct more general solutions was made 
\cite{Correa:2009xa,Takahashi:2012aw} 
by using a technique of integrable systems known as the nonlinear Schr\"{o}dinger or 
Ablowitz--Kaup--Newell--Segur hierarchy \cite{AKNS1974}.

\indent In this Letter, we exhaust all exact self-consistent solutions 
of complex condensates
under uniform boundary conditions, 
and we find that they describe multiple twisted kinks. 
Unlike polarons in real condensates, 
where the distance between the kink and anti-kink is fixed,
the situation is drastically simplified in our multiple twisted-kink solutions; 
we determine the filling rate of fermions for bound states of each kink, 
and we find that each kink can be placed at any position 
and has any phase shift quantized by $\pi/N$ with the number of flavors $N$.

\textit{Fundamental equations.---}%
The fundamental equations which we consider in this Letter appear in both condensed matter and high energy physics. In the condensed matter language, they are the one-dimensional BdG system with the Andreev approximation consisting of the BdG equation for right movers (BdG${}_{\text{R}}$)
	\begin{align}
		\begin{pmatrix} -\mathrm{i}\partial_x & \Delta(x) \\ \Delta(x)^* & \mathrm{i}\partial_x \end{pmatrix}\begin{pmatrix} u_{\text{R}} \\ v_{\text{R}} \end{pmatrix} = \epsilon\begin{pmatrix} u_{\text{R}} \\ v_{\text{R}} \end{pmatrix},
\label{eq:BdG-R}
	\end{align}
	the BdG equation for left movers (BdG${}_{\text{L}}$)
	\begin{align}
		\begin{pmatrix} \mathrm{i}\partial_x & \Delta(x) \\ \Delta(x)^* & -\mathrm{i}\partial_x \end{pmatrix}\begin{pmatrix} u_{\text{L}} \\ v_{\text{L}} \end{pmatrix} = \epsilon\begin{pmatrix} u_{\text{L}} \\ v_{\text{L}} \end{pmatrix},
\label{eq:BdG-L}
	\end{align}
	and the gap equation as a self-consistent condition
	\begin{align}
		-\frac{\Delta(x)}{g} = \sum_{\text{occupied states}}\left( u_{\text{R}}v_{\text{R}}^*+u_{\text{L}}v_{\text{L}}^* \right).
\label{eq:gap}
	\end{align}
For a derivation from the second quantized Hamiltonian, see, e.g., Ref.~\cite{Machida:1984zz}.

In high energy physics, this problem is equivalent to 
the chiral GN model with $N$ flavors, 
\begin{eqnarray}
{\cal L} = \bar \psi \mathrm{i} {\slash \hspace{-1.05ex}\partial} \psi 
+ \frac{g^2}{2N} \left[(\bar \psi \psi)^2 
+ (\bar \psi \mathrm{i} \gamma_5 \psi)^2\right] \label{eq:GN}
\end{eqnarray}
with $\psi(x) =(\psi_1(x),\cdots,\psi_N(x))^T$ \cite{Gross:1974jv,Nambu:1961tp,Dashen:1975xh,Shei:1976mn}.
Introducing the auxiliary fields $\sigma(x)$ and $\pi(x)$, 
this can be rewritten as
\begin{eqnarray}
{\cal L} = \bar \psi \mathrm{i} {\slash \hspace{-1.05ex}\partial} \psi 
- g \bar \psi (\sigma + \mathrm{i} \pi \gamma_5) \psi 
- \frac{N}{2} (\sigma^2 + \pi^2).
\end{eqnarray}
Eliminating $\sigma(x)$ and $\pi(x)$ by their equations of motion, $\sigma = -(g/N) \bar \psi \psi$ and $\pi = -(g/N) \bar \psi \mathrm{i} \gamma_5\psi$, 
takes us back to (\ref{eq:GN}). Instead, we integrate out $\psi(x)$ 
to obtain $ Z=\int {\cal D}\sigma {\cal D}\pi\exp(\mathrm{i}S_{\text{eff}}) $ with
\begin{eqnarray}
 S_{\text{eff}} =  N \left( -\mathrm{i} \ln {\rm Det} 
\left[\mathrm{i} {\slash \hspace{-1.05ex}\partial} 
- g (\sigma + \mathrm{i} \pi \gamma_5) \right]
- \frac{1}{2} (\sigma^2 + \pi^2)
\right).
\end{eqnarray}
Defining $\Delta(x) = \sigma (x) + \mathrm{i} \pi (x)$, 
the gap equation is obtained in the large-$N$ limit 
as the stationary condition for $\Delta^*(x)$
\begin{eqnarray}
 \Delta(x) = - 4 \mathrm{i}
\frac{\delta}{\delta \Delta^*(x)}  
\ln {\rm Det} \left[\mathrm{i} {\slash \hspace{-1.05ex}\partial} 
-g(\sigma + \mathrm{i} \pi \gamma_5 ) \right]. \label{eq:gap2}
\end{eqnarray}
In the Hartree-Fock formalism, we consider 
$H_{\text{R}} \psi_{\text{R}} = \epsilon \psi_{\text{R}}$ and 
$H_{\text{L}} \psi_{\text{L}} = \epsilon \psi_{\text{L}}$
with single-particle Hamiltonians  
$H_{\text{R}} = - \mathrm{i} \gamma_5 \partial_x
+ \gamma_0 (\sigma + \mathrm{i} \pi \gamma_5 )$ and 
$H_{\text{L}} = + \mathrm{i} \gamma_5 \partial_x
+ \gamma_0 (\sigma + \mathrm{i} \pi \gamma_5 )$,  
reducing to 
the BdG Eqs.~(\ref{eq:BdG-R}) and (\ref{eq:BdG-L}) 
with $\gamma_0 = \sigma_1$, $\gamma_1 = -\mathrm{i}\sigma_2$ 
and $\gamma_5 =\sigma_3$,   
while the consistency condition 
$\Delta  = -(g/ N) \left(\left<\bar \psi \psi\right> + \mathrm{i} \left<\bar\psi \mathrm{i} \gamma_5 \psi\right>\right)$ reduces to Eq.~(\ref{eq:gap}).

\textit{Result from the inverse scattering theory.---}%
	First, we briefly summarize the mathematical expressions of the  $ n $-soliton solution and its eigenstates of the self-defocusing Zakharov--Shabat eigenvalue problem \cite{ZakharovShabat2}
	\begin{align}
		\begin{pmatrix} -\mathrm{i}\partial_x & \Delta(x) \\ \Delta(x)^* & \mathrm{i}\partial_x \end{pmatrix}\begin{pmatrix} u \\ v \end{pmatrix} = \epsilon\begin{pmatrix} u \\ v \end{pmatrix} \label{eq:ZS}
	\end{align}
	obtained by the inverse scattering method \cite{FaddeevTakhtajan}. The detailed derivation is provided in the Supplemental Material \footnote{See the Supplemental Material below.}.\\
	\indent Let us assume that the gap function obeys the following asymptotically uniform boundary condition:
	\begin{align}
		|\Delta(x)| \rightarrow m \ (>0), \qquad x\rightarrow\pm\infty. \label{eq:gapNVB}
	\end{align}
	Subsequently, we parametrize the energy and wavenumber of quasiparticles using the uniformizing variable $ s $ defined by
	\begin{align}
		\epsilon(s) = \frac{m}{2}(s+s^{-1}),\quad k(s) = \frac{m}{2}(s-s^{-1}). \label{eq:uniformizer}
	\end{align}
	We can easily verify that the dispersion relation $ \epsilon^2=k^2+m^2 $ holds for an arbitrary complex number $ s $. Eigenstates corresponding to $ s $ on the real axis are scattering states, while those on the unit circle are bound states. Since $ s $ and $ s^* $ on the unit circle correspond to the same bound state, it is sufficient to consider the unit circle in the upper half-plane when we count the number of bound states (see Fig.~\ref{fig:uniformizer}).\\ 
	\begin{figure}[tb]
		\begin{center}
		\includegraphics{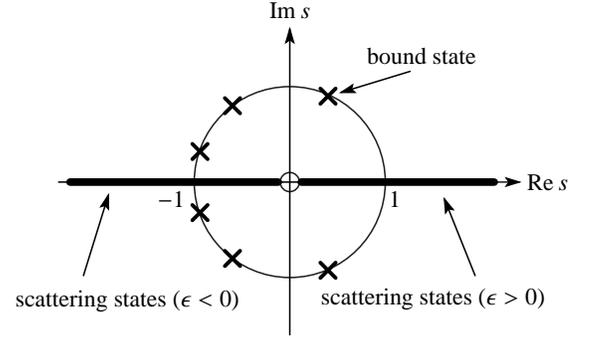}
		\caption{\label{fig:uniformizer} Scattering and bound states in $ s $-plane. Scattering states with positive (negative) energy exist on the real and positive (negative) axis. Bound states exist on the unit circle, and $ s $ and $ s^* $ represent the same bound state.}
		\end{center}
	\end{figure}
	\begin{figure}[tb]
		\begin{center}
		\includegraphics[scale=1.2]{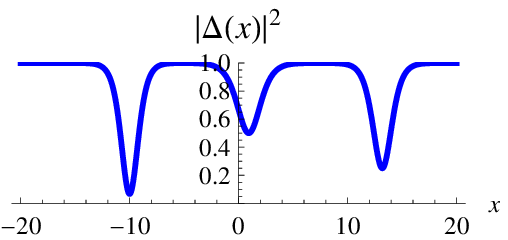} \\ \includegraphics[scale=1.2]{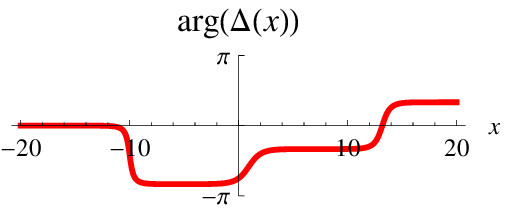}
		\caption{\label{fig:threekink} Example of a three-kink solution. Here the parameters are $ m=1,\,s_1=\mathrm{e}^{\frac{5}{12}\pi\mathrm{i}},\,s_2=\mathrm{e}^{\frac{2}{3}\pi\mathrm{i}},\,s_3=\mathrm{e}^{\frac{3}{4}\pi\mathrm{i}},\,x_1=-10,\,x_2=10, $ and $ x_3=0 $. The positions of the solitons (Eq.~(\ref{eq:SolitonPosition010})) are calculated as $ X_1=-10,\,X_2=13.18 $, and $ X_3=0.93 $. }
		\end{center}
	\end{figure}
	\indent Let us consider the gap function $ \Delta(x) $ which has $ n $ bound states and acts as a reflectionless potential for scattering states, {\it i.e.},  the $ n $-soliton solution. By writing the $ s $ values of the bound states as $ s_j=\mathrm{e}^{\mathrm{i}\theta_j}\; (j=1,\dots,n) $ with $0<\theta_j<\pi$, the eigenenergy and the complex wavenumber can be rewritten as
	\begin{align}
		\kappa_j:=-\mathrm{i}k(s_j)=m\sin\theta_j,\qquad \epsilon_j:=\epsilon(s_j)=m\cos\theta_j,
	\end{align}
respectively. According to the inverse scattering theory, $ \theta_1,\dots,\theta_n $ are all different from each other and there is no degeneracy. We further introduce the following notation:
	\begin{align}
		e_j(x)=\sqrt{\kappa_j}\,\mathrm{e}^{\kappa_j(x-x_j)}, \quad (j=1,\dots,n).
	\end{align}
	Here, the real constant $ x_j $ represents the position of the $ j $-th soliton up to an additive constant when solitons are well separated from each other, as shown below. Furthermore, we define the functions $ f_1(x),\dots,f_n(x) $ as solutions of the following linear equation:
	\begin{align}
		\begin{pmatrix}f_1 \\ f_2 \\ \vdots \\ f_n \end{pmatrix}+\begin{pmatrix}e_1 \\ e_2 \\ \vdots \\ e_n \end{pmatrix}-\frac{2\mathrm{i}}{m}\begin{pmatrix} \frac{e_1^2}{s_1^{-1}-s_1} & \frac{e_1e_2}{s_1^{-1}-s_2} & \dots & \frac{e_1e_n}{s_1^{-1}-s_n} \\ \frac{e_2e_1}{s_2^{-1}-s_1} & \frac{e_2^2}{s_2^{-1}-s_2} & \dots & \frac{e_2e_n}{s_2^{-1}-s_n} \\ \vdots && \ddots & \vdots \\ \frac{e_ne_1}{s_n^{-1}-s_1} & \frac{e_ne_2}{s_n^{-1}-s_2} & \dots & \frac{e_n^2}{s_n^{-1}-s_n} \end{pmatrix}\begin{pmatrix}f_1 \\ f_2 \\ \vdots \\ f_n \end{pmatrix}=0. \label{eq:NsolitonGLM}
	\end{align}
	Here, the argument $ x $ is abbreviated.\\
	\indent By using the above notations, the $ n $-soliton solution can be expressed as
	\begin{align}
		\Delta(x) = m+2\mathrm{i}\sum_{j=1}^n s_j^{-1}e_j(x)f_j(x). \label{eq:gapnsol}
	\end{align}
The complex-valued  $ n $-soliton solution has $ 2n $ parameters  $ s_1,\dots,s_n,x_1,\dots,x_n $, and this number of parameters is exactly twice that of the real-valued soliton solution.
	This $ \Delta(x) $ has the following asymptotic form:
	\begin{align}
		\Delta(x) \rightarrow \begin{cases} m & (x\rightarrow-\infty), \\ m\mathrm{e}^{-2\mathrm{i}(\theta_1+\theta_2+\dots+\theta_n)} & (x\rightarrow+\infty).\end{cases}
	\end{align}
	If the solitons are sufficiently separated from each other, the phase shift brought about by the $ j $-th soliton is $ s_j^{-2}=\mathrm{e}^{-2\mathrm{i}\theta_j} $, and the position of the $ j $-th soliton $ X_j $ is given by
	\begin{align}
		X_j=x_j+\frac{1}{\kappa_j}\sum_{\text{$ l $ s.t. $ x_l<x_j $}}\log\left|\frac{\sin\frac{\theta_l+\theta_j}{2}}{\sin\frac{\theta_l-\theta_j}{2}}\right|. \label{eq:SolitonPosition010}
	\end{align}
	Figure \ref{fig:threekink} shows an example of the three-soliton solution.

	The reduction to the real-valued soliton solution is obtained as follows. When the number of solitons is even $ (n=2n') $, the relations
	\begin{align}
		s_{2j-1}=-s_{2j}^*,\quad x_{2j-1}=x_{2j} \quad (j=1,\dots,n') \label{eq:realcond}
	\end{align}
    yield real-valued solutions.
	When the number of solitons is odd $ (n=2n'+1) $, we need to consider the term $ s_{2n'+1}=\mathrm{e}^{\mathrm{i}\pi/2} $ in addition to Eq. (\ref{eq:realcond}) while $ x_{2n'+1} $ remains arbitrary. By this reduction, we obtain $ f_{2j-1}(x)=f_{2j}(x)^* $ and $ f_{2n'+1}(x)=f_{2n'+1}(x)^* $, and the imaginary part of Eq. (\ref{eq:gapnsol}) vanishes. \\
	\indent The  bound state with $ s=s_j \ (\leftrightarrow \ \epsilon=m\cos\theta_j) $ is given by
	\begin{align}
		\begin{pmatrix}u_j(x) \\ v_j(x) \end{pmatrix} = \begin{pmatrix}f_j(x) \\ s_j f_j(x)^*\end{pmatrix} \qquad (j=1,\dots,n). \label{eq:uvnbs}
	\end{align}
	We can show that this state is already normalized, {\it i.e.},  $ \int\mathrm{d}x\bigl(|u_j|^2\!+\!|v_j|^2\bigr)=1 $ holds.\\
	\indent Finally, let $ s $ be real. Then, the scattering states are given by
	\begin{align}
		\begin{pmatrix} u(x,s) \\ v(x,s) \end{pmatrix} = \mathrm{e}^{\mathrm{i}k(s)x}\Biggl[ \begin{pmatrix} 1 \\ s^{-1} \end{pmatrix} +\frac{2\mathrm{i}}{m}\sum_{j=1}^n \frac{e_j(x)}{s_j-s}\begin{pmatrix}f_j(x) \\ s_jf_j(x)^* \end{pmatrix} \Biggr], \label{eq:uvnsol}
	\end{align}
	which are obviously reflectionless as observed from the expression. The amplitudes of these solutions at $ x=\pm\infty $ are
	\begin{align}
		|u(\pm\infty,s)|^2+|v(\pm\infty,s)|^2=1+s^{-2}. \label{eq:ampnsol}
	\end{align}

	\textit{Occupation states and gap equation.---}%
	\begin{figure}[tb]
		\begin{center}
		\includegraphics{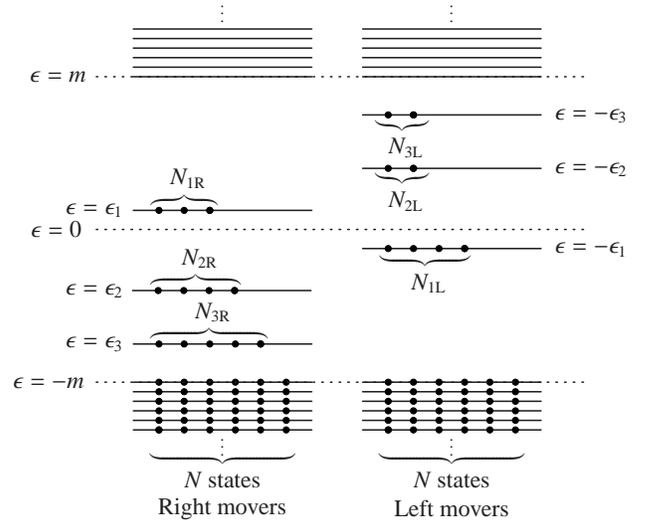}
		\caption{\label{fig:occupation} Diagram of the occupation states considered in this Letter. In this example figure, the number of flavors is $ N=6 $ and the number of solitons is $ n=3 $. The filling rates defined by Eq. (\ref{eq:filling}) are given by $ \nu_1=-1/6,\, \nu_2=1/3 $, and $ \nu_3=1/2 $.}
		\end{center}
	\end{figure}
From this point onwards, we consider the occupation states of the BdG system with $ N $ internal degrees of freedom, or equivalently, the chiral GN model with  $ N $ flavors. We first note that the following relation exists between the solutions of the right and left movers:
	\begin{align}
	\begin{split}
		&(\epsilon, u(x),v(x)) \text{ is a solution of $ \text{BdG}_{\text{R}} $.} \\
		\leftrightarrow \quad& (-\epsilon^*, -v(x)^*,u(x)^*) \text{ is a solution of $ \text{BdG}_{\text{L}} $.}
	\end{split} \label{BdGRLRL}
	\end{align}
	Thus, we can rewrite all quasiparticle wavefunctions of the left movers using those of the right movers. In the light of examining low-energy excited states of condensed matter systems, we consider the configurations in which all the negative-energy scattering states are filled by fermions and positive-energy states are completely vacant. As for bound states, we label the bound states of BdG${}_{\text{R}}$ as  $ (u_{j,\text{R}},v_{j,\text{R}}) \ (j=1,\dots,n) $, and we also label the corresponding bound states of BdG${}_{\text{L}}$ with the energy of the opposite sign related by Eq.~(\ref{BdGRLRL}) as $ (u_{j,\text{L}},v_{j,\text{L}}) $. These states are assumed to be filled partially, and we write the occupation number as  $ N_{j\text{R}} $ and $ N_{j\text{L}} $, as schematically shown in Fig.~\ref{fig:occupation}. The gap equation subsequently becomes
	\begin{align}
	& -\frac{\Delta(x)}{g} \nonumber \\
&= \sum_{\substack{\text{s.s.} \\ \epsilon<0}}Nu_{\text{R}}v_{\text{R}}^*+\sum_{\substack{\text{s.s.} \\  \epsilon<0}}Nu_{\text{L}}v_{\text{L}}^*+\sum_{\text{b.s.}}N_{j\text{R}}u_{j,\text{R}}v_{j,\text{R}}^*+\sum_{\text{b.s.}}N_{j\text{L}}u_{j,\text{L}}v_{j,\text{L}}^* \nonumber \\
		&=N\raisebox{-0.9ex}{\Bigg(}\sum_{\substack{\text{s.s.} \\ \epsilon<0}}u_{\text{R}}v_{\text{R}}^*-\sum_{\substack{\text{s.s.} \\ \epsilon>0}}u_{\text{R}}v_{\text{R}}^* \raisebox{-0.9ex}{\Bigg)}+\sum_{\text{b.s.}}(N_{j\text{R}}-N_{j\text{L}})u_{j,\text{R}}v_{j,\text{R}}^*.
	\end{align}
	Here, the notation s.s. (b.s.) denotes the scattering (bound) states, and the relation (\ref{BdGRLRL}) is used to obtain the second equality in the above equation. Defining the filling rate by
	\begin{align}
		\nu_j := \frac{N_{j\text{R}}-N_{j\text{L}}}{N}, \quad -1\le\nu_j\le1 \qquad (j=1,\dots,n), \label{eq:filling}
	\end{align}
	the above equation can be rewritten as follows:
	\begin{align}
		-\frac{\Delta(x)}{\tilde{g}}=\sum_{\substack{\text{s.s.} \\ \epsilon<0}}u_{\text{R}}v_{\text{R}}^*-\sum_{\substack{\text{s.s.} \\ \epsilon>0}}u_{\text{R}}v_{\text{R}}^*+\sum_{\text{b.s.}}\nu_j u_{j,\text{R}}v_{j,\text{R}}^* \label{eq:gapeqdisc}
	\end{align}
	with  $ \tilde{g}:=Ng $. It is to be noted that the sum of positive-energy scattering states in Eq. (\ref{eq:gapeqdisc}) has a negative sign because of the relation (\ref{BdGRLRL}), and it is equivalent to the stationary condition of the action in the GN model, as given in Ref.~\cite{Shei:1976mn}. Thus, we can again confirm the equivalence of the problems between the BdG and GN systems.\\
	\indent Henceforth, we always use the quantities of the BdG${}_{\text{R}} $ system, and we omit the subscript R. Considering the limit $ L\rightarrow\infty $, where  $ L $ denotes the system size, we replace the sum of the scattering states of the gap equation [Eq.~(\ref{eq:gapeqdisc})] by the corresponding integral. After renormalization of the coupling constant
	\begin{align}
		\frac{1}{\tilde{g}}=\frac{1}{2\pi}\int_{-\infty}^{\infty}\frac{\mathrm{d}k}{\sqrt{k^2+m^2}},
	\end{align}
	and subtracting the logarithmically divergent terms from both sides, we obtain the following expression:
	\begin{align}
		\begin{split}
		&0=\sum_{\text{b.s.}}\nu_j u_{j}(x)v_{j}(x)^*\\
		&\,+\sum_{\epsilon\gtrless 0}\int_{-\infty}^\infty\frac{\mathrm{d}k}{2\pi}\left( \frac{\Delta(x)}{2\sqrt{k^2+m^2}}-(\operatorname{sgn}\epsilon)\frac{u_k(x)v_k(x)^*}{|u_{k,\infty}|^2+|v_{k,\infty}|^2} \right),
		\end{split}
	\end{align}
	where we have written the scattering states with the wavenumber $ k $ as  $ (u_k(x),v_k(x)) $, and their amplitudes at infinity as $ (u_{k,\infty},v_{k,\infty}) $. (We note that  $ (u_k(x),v_k(x)) $ for a positive energy and that for a negative energy are different from each other, though we use the same notation.)
	It is convenient to rewrite the above integral in terms of the uniformizing variable $ s $ introduced in Eq. (\ref{eq:uniformizer}). 
	Using the relation $ \sqrt{k^2+m^2}=\frac{m}{2}|s|(1+s^{-2}) $, we obtain
	\begin{align}
	\begin{split}
		&0=\sum_{\text{b.s.}}\nu_j u_{j}(x)v_{j}(x)^*\\
		&\,+\left[\int_{-\infty}^0\!-\!\int_0^\infty\right]\frac{\mathrm{d}s}{2\pi}\left( \frac{m(1+s^{-2})u(x,s)v(x,s)^*}{2(|u(\infty,s)|^2+|v(\infty,s)|^2)}-\frac{\Delta(x)}{2s} \right). 
	\end{split}\label{eq:gapeqbys}
	\end{align}
	Here, we have written the scattering states labeled by $ s $ as $ (u(x,s),v(x,s)) $.\\ 

\textit{Self-consistent condition for the $ n $-soliton solution.---}%
	We first present our main result in the following theorem, and then provide the proof.\\
	{\bf Theorem.}
		Let $ \Delta(x) $ be an $ n $-soliton solution given by Eq. (\ref{eq:gapnsol}). The gap equation [Eq.~(\ref{eq:gapeqbys})] holds if and only if the filling rate $ \nu_j $ satisfies 
		\begin{align}
			\nu_j = \frac{2\theta_j-\pi}{\pi} \qquad (j=1,\dots,n).
		\end{align}
	\noindent Here, we remark on certain aspects of this theorem:\\
1. This theorem provides all self-consistent solutions under the uniform boundary condition [Eq.~(\ref{eq:gapNVB})], because  $ \Delta(x) $  needs to be a reflectionless potential in order for the gap equation to hold \cite{Dashen:1975xh,Shei:1976mn,Feinberg:2003qz}, and $ n $-soliton solutions cover all reflectionless potentials with  $ n $ bound states. \\
2.	The filling rate $ \nu_j $ for the $ j $-th bound state only depends on the phase shift of the $ j $-th soliton, and it is not affected by other soliton parameters. Thus, the self-consistent condition is decoupled for each bound state (or each soliton). \\
3.	The parameter $ x_j \ (j=1,\dots,n) $, which represents the position of the soliton up to an additive constant [Eq.~(\ref{eq:SolitonPosition010})], is arbitrary and is not related to the self-consistency. This contrasts with the case of real-valued condensates. Because they must be real, the distance between two solitons must be fixed to a specific value, such as in the case of the polarons in polyacetylene \cite{Campbell:1981dc,Campbell:1981,OkunoOnodera} and the topologically trivial soliton in the GN model \cite{Feinberg:2002nq}.\\
4.	For the $ N $-flavor system, the possible values of the filling rate are given by $ \nu=\frac{N-1}{N},\frac{N-2}{N},\dots,-\frac{N-1}{N} $. Correspondingly, the possible phase shift of each soliton is also discretized. For example,  only the trivial value $ \nu=0 $ is allowed for $ N=1 $, which corresponds to the real kink $ 2\theta=\pi $. When  $ N=2 $, the values  $ \nu=-\frac{1}{2},\,0,\,\frac{1}{2} $ are allowed, which correspond to $ \theta=\frac{\pi}{4},\,\frac{\pi}{2},\frac{3\pi}{4} $. The cases $ N=1 $ and $ 2 $ correspond to $ s $-wave superconductors and polyacetylene, respectively. The cases of $N$ are also obtained as a dimensional reduction of nonrelativistic field theories in 3+1 dimensions, for which $N$ is the number of patches of the Fermi surface \cite{Hofmann:2010gc}. On the other hand, any soliton solution can be self-consistent when $ N=\infty $. 
	\begin{proof}
		Upon substituting Eqs.~(\ref{eq:gapnsol}), (\ref{eq:uvnsol}), and (\ref{eq:ampnsol}) into the integrand of the gap equation (\ref{eq:gapeqbys}),  the terms $ 1+s^{-2} $ and $ |u(\pm\infty,s)|^2+|v(\pm\infty,s)|^2 $ in the first term in the bracket undergo cancellation, thereby yielding 
		\begin{align}
		\begin{split}
			&\frac{m}{2}u(x,s)v(x,s)^*-\frac{\Delta(x)}{2s} \\
			=&\,2\sum_js_j^{-1}e_jf_j\frac{\sin\theta_j}{|s-s_j|^2}-\frac{4\mathrm{i}}{m}\sum_{j,l}\frac{e_jf_je_lf_l}{1-s_js_l}\frac{\sin\theta_j}{|s-s_j|^2} \\ 
			=&\,-2\sum_js_j^{-1}f_j^2\frac{\sin\theta_j}{|s-s_j|^2}.
		\end{split}
		\end{align}
		Here, Eq. (\ref{eq:NsolitonGLM}) is used to obtain the last line.
		Using the formula $ \int|s-s_j|^{-2}\mathrm{d}s=(\sin\theta_j)^{-1}\tan^{-1}[(s-\cos\theta_j)/(\sin\theta_j)] $, 
		we can perform the integration and obtain 
		\begin{align}
&			\left[\int_{-\infty}^0-\int_0^\infty\right]\frac{\mathrm{d}s}{2\pi}\left(\frac{m}{2}u(x,s)v(x,s)^*-\frac{\Delta(x)}{2s}\right) \nonumber \\
&= -\sum_js_j^{-1}f_j^2\frac{2\theta_j-\pi}{\pi}.
		\end{align}
		Recalling that the bound states are given by Eq.~(\ref{eq:uvnbs}), we finally obtain
		\begin{align}
			(\text{R.H.S. of Eq. (\ref{eq:gapeqbys})})=\sum_js_j^{-1}f_j^2\left( \nu_j-\frac{2\theta_j-\pi}{\pi} \right).
		\end{align}
		Since the functions  $ f_1(x)^2,\dots,f_n(x)^2 $ are linearly independent of each other, the theorem holds.
	\end{proof}

\textit{Summary.---}%
In summary, we have constructed all the exact self-consistent solutions of complex condensates under uniform boundary conditions.
Our multiple $n$-twisted kink solution 
contains $2n$ parameters, and 
each kink has one bound state. 
Each kink can be placed at any position, while
the self-consistency of the system requires 
the phase shift of each kink to be quantized by $\pi/N$ with the number of flavors $N$.
Our solution describes multiple grey solitons in ultracold atomic fermion gases, and our predictions require experimental verification. 
The dynamics and scattering of these solitons should be studied as a future research topic. 
Further research also needs to be conducted on the construction of self-consistent solutions 
under non-uniform boundary conditions.\\ 
%
\indent We would like to thank S.~Tsuchiya and R.~Yoshii for a discussion. D.~A.~T. is supported by the JSPS Institutional Program for Young Researcher Overseas Visits. The work of M.~N. is supported in part by 
KAKENHI  (No. 23740198 and No. 23103515). 
%

%

\onecolumngrid
\setcounter{equation}{0}
\clearpage
\begin{center}
{\LARGE Supplemental Material}
\vspace{2em}
\end{center}
\twocolumngrid
	In this Supplemental Material, we review the inverse scattering theory of the self-defocusing Zakharov--Shabat (ZS) operator under the finite-density boundary condition \cite{ZakharovShabat2,FaddeevTakhtajan}.  Subsequently, as a special solution, we derive the $ n $-soliton solution and its eigenstates. 
\section{I. Jost Solutions}
	In this section, we consider the direct problem of the ZS operator. We introduce Jost solutions and check several properties of them and transition coefficients defined by their asymptotic form.\\
	\indent The ZS eigenvalue problem in the self-defocusing case is given by
	\begin{align}
		\mathcal{L}\begin{pmatrix}u \\ v\end{pmatrix} =\epsilon\begin{pmatrix}u \\ v\end{pmatrix},\quad \mathcal{L} = \begin{pmatrix} -\mathrm{i}\partial_x & \Delta(x) \\ \Delta(x)^* & \mathrm{i}\partial_x \end{pmatrix}. \label{eq:ZSsupp}
	\end{align}
	We consider this problem under the finite-density boundary condition:
	\begin{align}
		\Delta(x) \rightarrow \begin{cases} m & (x\rightarrow-\infty), \\ m\mathrm{e}^{2\mathrm{i}\theta} & (x\rightarrow+\infty), \end{cases} \qquad (m>0). \label{eq:gapNVBsupp}
	\end{align}
	Since the ZS operator is self-adjoint in the self-defocusing case, 
its eigenvalues $ \epsilon $ are always real. More precisely, if $ (u,v)^T $ is a bounded function, $ \epsilon $ must be real. If we allow exponentially divergent solutions, the solution exists for an arbitrary complex number $ \epsilon $. \\
	\indent When the gap function is constant $ \Delta=m\mathrm{e}^{2\mathrm{i}\theta} $, the dispersion relation of the plane-wave solution $ (u,v)\propto \mathrm{e}^{\mathrm{i}kx} $ becomes 
	\begin{align}
		\epsilon^2 = k^2+m^2. \label{eq:dispe}
	\end{align}
	Thus, the energy spectrum has a gap $ -m<\epsilon<m $. Therefore, if $ \Delta(x) $  satisfies the boundary condition (\ref{eq:gapNVBsupp}), the discrete eigenvalues, which give normalizable bound states, appear in $ |\epsilon|<m $. 
\subsection{Wronskian}
	Let $ f_1=(u_1,v_1)^T $ and $ f_2=(u_2,v_2)^T $ be two solutions of Eq. (\ref{eq:ZSsupp}) for a given $ \epsilon $. The Wronskian is defined as 
	\begin{align}
		W(f_1,f_2)=u_1v_2-u_2v_1.
	\end{align}
	It is easy to show that $ \mathrm{d}W/\mathrm{d}x=0 $, i.e.,  $ W $ is a constant. Furthermore, if  $ W\not\equiv 0 $, these two solutions are linearly independent of each other, and therefore, they form a basis of the solution space. 
\subsection{Complex conjugate solution}
	Taking the complex conjugate of Eq. (\ref{eq:ZSsupp}), we obtain the relation
	\begin{align}
		\begin{pmatrix}u \\ v \end{pmatrix}  \text{ is a solution for $ \epsilon $.} \ \Leftrightarrow \  \begin{pmatrix} v^* \\ u^* \end{pmatrix} \text{ is a solution for $ \epsilon^* $.} \label{eq:ccsol}
	\end{align}
\subsection{Uniformizing variable}
	We parametrize the energy $ \epsilon $ and the wavenumber $ k $ using the uniformizing variable $ s $ defined by 
	\begin{align}
		\epsilon(s) = \frac{m}{2}\left( s+s^{-1} \right),\quad k(s) = \frac{m}{2}\left( s-s^{-1} \right). \label{eq:parametrize}
	\end{align}
	The uniformizing variable is convenient because we can avoid to introduce the Riemann surface consisting of two sheets, which is necessary to make the square-root function $ k=\pm\sqrt{\epsilon^2-m^2} $ single-valued. We can confirm that the dispersion relation (\ref{eq:dispe}) holds for arbitrary complex $ s $. The following relations are obvious from the definition:
	\begin{alignat}{2}
		\epsilon(s)&=\epsilon(s^{-1}),&\quad k(s) &= - k(s^{-1}), \label{eq:ekzzi} \\
		\epsilon(s^*)&=\epsilon(s)^*,&\quad k(s^*)&=k(s)^*.
	\end{alignat}
	It is also easy to check 
	\begin{align}
		\operatorname{Im}s\ge0 \Leftrightarrow \operatorname{Im}k(s)\ge0.
	\end{align}
	For real $ \epsilon $, it holds that $ |\epsilon|\ge m \leftrightarrow s \in \mathbb{R} $ and $ |\epsilon|\le m \leftrightarrow |s|=1 $. Therefore, eigenstates corresponding to $ s $ on the real axis are scattering states, and those corresponding to $s$ on the unit circle are bound states. Since $ s $ and $ s^* $ on the unit circle correspond to the same bound state, it is sufficient to consider the unit circle in the upper half-plane when we count the number of bound states. The edges of the continuous spectrum are given by $ \epsilon=\pm m \leftrightarrow s=\pm1 $. See Figure 1 of the main article. 
\subsection{Right Jost solutions}
	The basis of the solutions of Eq. (\ref{eq:ZSsupp}) with $ \epsilon=\epsilon(s) $ under the constant potential $ \Delta(x)=m\mathrm{e}^{2\mathrm{i}\theta} $ can be written as
	\begin{align}
		\begin{pmatrix}s\mathrm{e}^{\mathrm{i}\theta} \\ \mathrm{e}^{-\mathrm{i}\theta}\end{pmatrix}\mathrm{e}^{\mathrm{i}k(s)x},\quad \begin{pmatrix}\mathrm{e}^{\mathrm{i}\theta} \\ s\mathrm{e}^{-\mathrm{i}\theta}\end{pmatrix}\mathrm{e}^{-\mathrm{i}k(s)x}. \label{eq:unifsol}
	\end{align}
	Using the solutions (\ref{eq:unifsol}), we define the right Jost solution $ f_+(x,s) $ as a solution which has the following asymptotic forms:
	\begin{align}
		f_+(x,s)\rightarrow\begin{cases} a(s)\begin{pmatrix}s \\[-0.75ex] 1 \end{pmatrix}\mathrm{e}^{\mathrm{i}k(s)x}+b(s)\begin{pmatrix}1 \\[-0.75ex] s \end{pmatrix}\mathrm{e}^{-\mathrm{i}k(s)x} &(x\rightarrow-\infty) \\ \begin{pmatrix}s\mathrm{e}^{\mathrm{i}\theta} \\[-0.75ex] \mathrm{e}^{-\mathrm{i}\theta} \end{pmatrix}\mathrm{e}^{\mathrm{i}k(s)x} &(x\rightarrow+\infty). \end{cases} \label{eq:rightjost}
	\end{align}
	The solution with the above asymptotic form at $ x\rightarrow +\infty $ is uniquely determined, and hence the transition coefficients $ a(s) $ and $ b(s) $ are also defined uniquely. Since $ \Delta\rightarrow m $ at  $ x\rightarrow-\infty $, the phase factor $ \mathrm{e}^{\mathrm{i}\theta} $ is unnecessary for the form of $ x\rightarrow-\infty $. We do not need to introduce a new symbol for the right Jost solution with wavenumber $ -k $; it can be obtained simply by replacing  $ s\rightarrow 1/s $, because of Eq. (\ref{eq:ekzzi}):
	\begin{align}
		sf_+(x,s^{-1})\!\rightarrow\!\begin{cases} a(s^{-1})\begin{pmatrix}1 \\[-0.75ex] s \end{pmatrix}\mathrm{e}^{-\mathrm{i}k(s)x}\!+\!b(s^{-1})\begin{pmatrix}s \\[-0.75ex] 1 \end{pmatrix}\mathrm{e}^{\mathrm{i}k(s)x} &\!\!\!(x\rightarrow-\infty) \\ \begin{pmatrix}\mathrm{e}^{\mathrm{i}\theta} \\[-0.75ex] s\mathrm{e}^{-\mathrm{i}\theta} \end{pmatrix}\mathrm{e}^{-\mathrm{i}k(s)x} &\!\!\! (x\rightarrow+\infty). \end{cases} \label{eq:rightjost2}
	\end{align}
	Here we summarize the fundamental properties of the right Jost solutions and transition coefficients:
	\begin{enumerate}[(i)]
		\item The set of solutions $ \{ f_+(x,s), sf_+(x,s^{-1}) \} $ are linearly independent of each other unless $ s=\pm1 $, and hence they span the solution space.
		\item The relations $  a(s^{-1*})=a(s)^* $ and $ b(s^{-1*})=b(s)^* $ hold. Specifically,  $ a(s^{-1})=a(s)^* $ and $ b(s^{-1})=b(s)^* $ hold when $ s $ is real. 
		\item The relation $  a(s)a(s^{-1})-b(s)b(s^{-1})=1 $ holds. Specifically, $ |a(s)|^2-|b(s)|^2=1 $ holds when $ s $ is real. 
	\end{enumerate}
	\begin{proof} (i) and the former part of (iii) follow from the evaluation of the Wronskian of these two solutions at $ x=\pm\infty $: 
		\begin{align}
		\begin{split}
			W(+\infty)&=s^2-1 \\
			=W(-\infty)&=(s^2-1)\left[ a(s)a(s^{-1})-b(s)b(s^{-1})\right].
		\end{split}
		\end{align}
		By taking the complex conjugate of Eq.~(\ref{eq:rightjost}) and using the relation (\ref{eq:ccsol}), one can show
		\begin{align}
			\sigma_1 f_+(x,s)^* = s^*f_+(x,s^{-1*}). \label{eq:ccf}
		\end{align}
		Comparing the asymptotic forms of both sides of this equation, one obtains (ii). The latter part of (iii) follows from (ii).
	\end{proof}
	If we write $ t(s)=1/a(s) $ and $ r(s)=b(s)/a(s) $, the latter part of (iii) represents the conservation law $ |t(s)|^2+|r(s)|^2=1 $, and they are interpreted as the transmission and reflection coefficients.
\subsection{Left Jost solutions}
	We define the left Jost solution by the following asymptotic form:
	\begin{align}
		f_-(x,s) \rightarrow \begin{pmatrix} 1 \\ s \end{pmatrix}\mathrm{e}^{-\mathrm{i}k(s)x} \quad (x\rightarrow-\infty).
	\end{align}
	The counterpart with the plus-sign wavenumber $ +k $ can be written as $ s f_-(x,s^{-1}) $ by the same discussion with the right Jost solutions.\\
	\indent Using the asymptotic forms (\ref{eq:rightjost}) and (\ref{eq:rightjost2}), the transformation matrix between the right and left Jost solutions is obtained as
	\begin{align}
		\begin{pmatrix} f_+(x,s) & sf_+(x,s^{-1}) \end{pmatrix} = \begin{pmatrix} sf_-(x,s^{-1}) & f_-(x,s)\end{pmatrix}\begin{pmatrix} a(s) & b(s^{-1}) \\ b(s) & a(s^{-1}) \end{pmatrix}. \label{eq:rjosttoljost}
	\end{align}
	The determinant of the coefficient matrix is unity because of the property (iii). Therefore, the inverse relation is given by
	\begin{align}
		\begin{pmatrix} sf_-(x,s^{-1}) &\! f_-(x,s)\end{pmatrix}\!=\! \begin{pmatrix} f_+(x,s) &\! sf_+(x,s^{-1}) \end{pmatrix}\!\begin{pmatrix} a(s^{-1}) &\! -b(s^{-1}) \\ -b(s) &\! a(s) \end{pmatrix}\!. \label{eq:ljosttorjost}
	\end{align}
	By using Eq. (\ref{eq:ljosttorjost}), we can also write down the asymptotic forms for left Jost solutions.
\subsection{Bound states}
	The discrete spectrum, which gives normalizable bound states, is given by the zeros of the transition coefficient $ a(s) $. As already mentioned, the zeros exist on the unit circle in the $ s $-plane. If the system has $ n $ bound states, there are $ n $ zeros in the upper-half plane, and also $ n $ zeros in the lower-half plane. Since two zeros complex conjugate to each other represent the same bound state, it is sufficient to consider the zeros in the upper-half plane only.\\
	\indent Let the zeros in the upper-half plane and corresponding wavenumbers be $ s_j (j=1,\dots,n) $ and $ k(s_j)=\mathrm{i}\kappa_j $, respectively. Then, Eq. (\ref{eq:rjosttoljost}) becomes
	\begin{align}
		f_+(x,s_j)=b(s_j)f_-(x,s_j), \label{eq:rjostbound}
	\end{align}
	and the asymptotic forms in Eq.~(\ref{eq:rightjost}) are expressed as
	\begin{align}
		f_+(x,s_j)\rightarrow\begin{cases} b(s_j)\begin{pmatrix}1 \\[-0.75ex] s_j \end{pmatrix}\mathrm{e}^{\kappa_j x} &(x\rightarrow-\infty) \\ \begin{pmatrix}s_j\mathrm{e}^{\mathrm{i}\theta} \\[-0.75ex] \mathrm{e}^{-\mathrm{i}\theta} \end{pmatrix}\mathrm{e}^{-\kappa_j x} &(x\rightarrow+\infty). \end{cases} \label{eq:rightjost25}
	\end{align}
	The following Proposition holds for the normalization constant of bound states. 
	\begin{prop}\label{prop:boundstates} Let us write one of bound states as $ f_+(x,s_j)=(u_j(x),v_j(x))^T $ and define the normalization constant $ c_j^2 $ as follows:
	\begin{align}
		c_j^{-2} := \int_{-\infty}^\infty\!\!\mathrm{d}x \left( |u_j(x)|^2+|v_j(x)|^2 \right).
	\end{align}
	By definition $ c_j^2 $ is real, positive and finite. The following relation between the transition coefficients and the normalization constant holds:
	\begin{align}
		\mathrm{i}\dot{a}(s_j)b(s_j)^*s_j=\frac{m}{2}c_j^{-2}. \label{eq:transandnorm}
	\end{align}
	Here, the dot represents the differentiation with respect to $ s $, i.e.,  $ \dot{a}(s_j) = \frac{\mathrm{d} a(s)}{\mathrm{d} s}\Big|_{s=s_j} $. As a corollary of  Eq.~(\ref{eq:transandnorm}), it also follows that $ \dot{a}(s_n)\ne0 $, which implies that all the zeros of $ a(s) $ are simple. 
	\end{prop}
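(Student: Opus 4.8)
The plan is to adapt the classical inverse-scattering identity that relates the $ s $-derivative of the transition coefficient at a bound-state point to the $ L^2 $-norm of the corresponding Jost solution, in the spirit of Ref.~\cite{FaddeevTakhtajan}. \emph{First, I would produce an auxiliary solution at the same (real) eigenvalue to pair against.} Writing $ f_+(x,s_j)=(u_j,v_j)^T $ and $ \epsilon_j:=\epsilon(s_j)\in\mathbb R $, the complex-conjugation rule \eqref{eq:ccsol} shows that $ p(x):=\sigma_1 f_+(x,s_j)^*=(v_j^*,u_j^*)^T $ is again a solution of \eqref{eq:ZSsupp} with $ \epsilon=\epsilon_j $ (this uses that $ \epsilon_j $ is real for $ |s_j|=1 $); equivalently, by \eqref{eq:ccf} and $ s_j^{-1*}=s_j $ on the unit circle, $ p=s_j^*f_+(x,s_j) $.

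\emph{Next, I would differentiate the eigenvalue equation in $ s $ to get an inhomogeneous companion problem and a bilinear conservation law.} Setting $ g:=\partial_s f_+(x,s)\big|_{s=s_j}=(g_1,g_2)^T $, differentiating $ \mathcal Lf_+(x,s)=\epsilon(s)f_+(x,s) $ and evaluating at $ s=s_j $ gives $ (\mathcal L-\epsilon_j)g=\dot\epsilon(s_j)f_+(x,s_j) $, with $ \dot\epsilon(s)=\tfrac m2(1-s^{-2}) $. Substituting the first-order forms of $ \mathcal Lp=\epsilon_jp $ and of this inhomogeneous equation into $ \tfrac{\mathrm d}{\mathrm dx}(p_2g_1-p_1g_2) $, all terms containing $ \Delta $, $ \Delta^* $ or $ \epsilon_j $ cancel and one is left with
\[
\frac{\mathrm d}{\mathrm dx}\bigl(p_2g_1-p_1g_2\bigr)=\mathrm i\dot\epsilon(s_j)\bigl(u_jp_2+v_jp_1\bigr)=\mathrm i\dot\epsilon(s_j)\bigl(|u_j|^2+|v_j|^2\bigr),
\]
the last step using the explicit $ p $. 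Integrating over $ x\in\mathbb R $ gives $ \bigl[p_2g_1-p_1g_2\bigr]_{-\infty}^{\infty}=\mathrm i\dot\epsilon(s_j)\,c_j^{-2} $.

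\emph{Then I would evaluate the two boundary contributions from the known asymptotics and simplify.} At $ x\to+\infty $, \eqref{eq:rightjost25} gives $ f_+(x,s_j)\to(s_j\mathrm e^{\mathrm i\theta},\mathrm e^{-\mathrm i\theta})^T\mathrm e^{-\kappa_jx} $, hence $ p=O(\mathrm e^{-\kappa_jx}) $ and $ g=O(|x|\mathrm e^{-\kappa_jx}) $, so $ p_2g_1-p_1g_2=O(|x|\mathrm e^{-2\kappa_jx})\to0 $. At $ x\to-\infty $, \eqref{eq:rightjost} with $ a(s_j)=0 $ gives $ f_+(x,s_j)\to b(s_j)(1,s_j)^T\mathrm e^{\kappa_jx} $, while $ \partial_s $ isolates $ \dot a(s_j) $ as the coefficient of the \emph{growing} branch $ \mathrm e^{-\kappa_jx} $, i.e.\ $ g\to\dot a(s_j)(s_j,1)^T\mathrm e^{-\kappa_jx} $ plus exponentially decaying terms; the mismatched exponentials combine to the finite limit $ p_2g_1-p_1g_2\to b(s_j)^*\dot a(s_j)(s_j-s_j^*) $. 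Thus $ -b(s_j)^*\dot a(s_j)(s_j-s_j^*)=\mathrm i\dot\epsilon(s_j)c_j^{-2} $; rearranging and using $ s_j^*=s_j^{-1} $ together with $ s_j\dot\epsilon(s_j)=\tfrac m2(s_j-s_j^{-1}) $ collapses the prefactor of $ c_j^{-2} $ to $ \tfrac m2 $, giving $ \mathrm i\dot a(s_j)b(s_j)^*s_j=\tfrac m2c_j^{-2} $. The corollary is then immediate: $ c_j^{-2}>0 $ because $ f_+(x,s_j) $ is a nonzero normalizable solution, and $ b(s_j)\neq0 $ because $ f_+(x,s_j)=b(s_j)f_-(x,s_j) $ from \eqref{eq:rjostbound} would otherwise force $ f_+(\cdot,s_j)\equiv0 $; hence $ \dot a(s_j)\neq0 $ and all zeros of $ a(s) $ are simple.

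The main obstacle I anticipate is the asymptotic bookkeeping in the last step rather than the algebra: one must correctly identify $ \mathrm e^{-\kappa_jx} $ as the exponentially growing branch of $ f_+ $ as $ x\to-\infty $ (so that $ \partial_s $ picks out exactly $ \dot a(s_j) $ there), carefully track the signs through $ (s_j-s_j^*) $ and $ \dot\epsilon(s_j) $, and check that the polynomial-in-$ x $ corrections generated by $ \partial_s $ neither spoil the vanishing of the boundary term at $ +\infty $ nor contribute at $ -\infty $. Justifying the interchange of $ \partial_s $ with the $ x\to\pm\infty $ limits (uniformity of the Jost asymptotics in $ s $ near $ s_j $) is routine but should be stated; the cancellation yielding the bilinear identity and the final arithmetic are both short and mechanical.
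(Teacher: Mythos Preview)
Your proposal is correct and follows essentially the same route as the paper: the bilinear identity you derive, $\tfrac{\mathrm d}{\mathrm dx}(p_2g_1-p_1g_2)=\mathrm i\dot\epsilon(s_j)(|u_j|^2+|v_j|^2)$, is exactly the paper's relation $-\mathrm i(\dot u u^*-\dot v v^*)_x=\dot\epsilon(|u|^2+|v|^2)$ (since $p_2g_1-p_1g_2=\dot u u^*-\dot v v^*$), and your boundary evaluation and final simplification via $s_j^*=s_j^{-1}$, $s_j\dot\epsilon(s_j)=\tfrac m2(s_j-s_j^{-1})$ match the paper's line by line. The only cosmetic difference is packaging: you introduce the auxiliary solution $p=\sigma_1 f_+^*$ and phrase the identity as a Wronskian-type law, whereas the paper eliminates $\Delta,\Delta^*$ directly from the complex-conjugated equations.
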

	\begin{proof}
		We often omit arguments of functions when it is clear from the context. The dot and the subscript $ x $ denote the differentiation with respect to $ s $ and  $ x $, respectively. First, we prove the following relation for real $ \epsilon $:
		\begin{align}
			-\mathrm{i}(\dot{u}u^*-\dot{v}v^*)_x=\dot{\epsilon}\left( |u|^2+|v|^2 \right). \label{eq:bslem}
		\end{align}
		It can be shown as follows. Equation (\ref{eq:ZSsupp}) and its derivative with respect to $s$ are 
		\begin{alignat}{2}
			-\mathrm{i}u_x+\Delta v&=\epsilon u,& \quad \mathrm{i}v_x+\Delta^*u&=\epsilon u, \label{eq:bslem2} \\
			-\mathrm{i}\dot{u}_x+\Delta \dot{v}&=\dot{\epsilon}u+\epsilon \dot{u},& \quad \mathrm{i}\dot{v}_x+\Delta^*\dot{u}&=\dot{\epsilon}v+\epsilon\dot{v}, \label{eq:bslem3}
		\end{alignat}
respectively.
		From the complex conjugate of (\ref{eq:bslem2}), we obtain
		\begin{align}
			\Delta = \frac{\epsilon^* v^*+\mathrm{i}v_x^*}{u^*},\quad \Delta^* = \frac{\epsilon^*u^*-\mathrm{i}u_x^*}{v^*}.
		\end{align}
		Using these, we can eliminate  $ \Delta, \Delta^* $ from Eq. (\ref{eq:bslem3}), yielding
		\begin{align}
			-\mathrm{i}(\dot{u}u^*-\dot{v}v^*)_x=\dot{\epsilon}\left( |u|^2+|v|^2 \right)+(\epsilon-\epsilon^*)(\dot{u}u^*+\dot{v}v^*),
		\end{align}
		which reduces to Eq. (\ref{eq:bslem}) if $ \epsilon=\epsilon^* $. On the other hand, 
with recalling that  $ a(s_j)=0 $ and $ k(s_j)=\mathrm{i}\kappa_j $, 
substituting $ s=s_j $ into the differentiation of Eq. (\ref{eq:rightjost}) with respect to $s$ yields 
	\begin{align}
		&\!\!\!\dot{f}_+(x,s_j)\rightarrow \nonumber \\
		&\!\!\!\begin{cases} \dot{a}(s_j)\begin{pmatrix}s_j \\[-0.75ex] 1 \end{pmatrix}\mathrm{e}^{-\kappa_jx}\!+\!\begin{pmatrix}\dot{b}(s_j)-\mathrm{i}b(s_j)\dot{k}(s_j)x \\[-0.75ex] \dot{b}(s_j)s_j+b(s_j)-\mathrm{i}b(s_j)s_j\dot{k}(s_j)x \end{pmatrix}\mathrm{e}^{\kappa_jx} \hspace{1em} \!\!\!\!\!\!\!\!\!\!\!\!\! \\ \hfill (x\rightarrow-\infty)\hphantom{.}\!\!\!\!\!\!\!\!\!\!\!\!\! \\ \begin{pmatrix}s_j\mathrm{e}^{\mathrm{i}\theta}(1+\mathrm{i}s_j\dot{k}(s_j)x) \\[-0.75ex] \mathrm{i}\mathrm{e}^{-\mathrm{i}\theta}\dot{k}(s_j)x \end{pmatrix}\mathrm{e}^{-\kappa_jx} \hfill (x\rightarrow+\infty).\!\!\!\!\!\!\!\!\!\!\!\!\! \end{cases} \label{eq:rightjost3}
	\end{align}
	When $ \epsilon=\epsilon(s_j) $ and $ (u,v)^T=f_+(x,s_j) $, 
one can show from Eqs.~(\ref{eq:rightjost25}) and (\ref{eq:rightjost3}) that
	\begin{align}
		\lim_{x\rightarrow+\infty}\dot{u}u^*-\dot{v}v^*=0,\quad \lim_{x\rightarrow-\infty}\dot{u}u^*-\dot{v}v^*=\dot{a}(s_j)b(s_j)^*(s_j-s_j^*).
	\end{align}
	Therefore, the integration of both sides of Eq. (\ref{eq:bslem}) over $ x $ yields 
		\begin{align}
			\dot{\epsilon}(s_j)c_j^{-2}=\mathrm{i}\dot{a}(s_j)b(s_j)^*(s_j-s_j^*).
		\end{align}
	Since the relations $ \dot{\epsilon}(s)=k(s)/s $ and  $ s_j^*=s_j^{-1} $ hold, we obtain Eq. (\ref{eq:transandnorm}).
	\end{proof}
\subsection{Jost solutions for large  $ \epsilon $}
	When $ |\epsilon| $ is sufficiently large, the contribution of $ \Delta(x) $ becomes relatively negligible, and therefore the Jost solution comes close to a simple plane wave. Actually, the following relations hold for large $ |\epsilon| $:
	\begin{align}
		f_+(x,s)\mathrm{e}^{-\mathrm{i}k(s)x} &= \begin{pmatrix}s\mathrm{e}^{\mathrm{i}\theta} \\ \mathrm{e}^{-\mathrm{i}\theta}\end{pmatrix}+\tilde{O}(|s|), \label{eq:rjostasyms} \\
		f_-(x,s)\mathrm{e}^{\mathrm{i}k(s)x} &= \begin{pmatrix}1 \\ s\end{pmatrix}+\tilde{O}(|s|). \label{eq:ljostasyms}
	\end{align}
	Here  $ \tilde{O}(|s|) $ denotes
	\begin{align}
		\tilde{O}(|s|) = \begin{cases} O(1) & (|s|\gg1) \\ O(|s|) & (|s|\ll1). \end{cases}
	\end{align} 
	These relations are shown by deriving the Volterra integral equation from Eq. (\ref{eq:ZSsupp}) and solving it iteratively. 
	The transition coefficient  $ a(s) $ can be written as
	\begin{align}
		a(s)= \frac{W(f_+(x,s),f_-(x,s))}{s^2-1} \label{eq:asinwronskian}
	\end{align}
	by using Eqs.~(\ref{eq:rightjost}) and (\ref{eq:ljosttorjost}). From Eqs. (\ref{eq:rjostasyms}), (\ref{eq:ljostasyms}) and (\ref{eq:asinwronskian}), we obtain
	\begin{align}
		a(s)=\begin{cases} \mathrm{e}^{\mathrm{i}\theta}+O(|s|^{-1}) & (|s|\gg 1) \\ \mathrm{e}^{-\mathrm{i}\theta}+O(|s|) & (|s|\ll1). \end{cases}\label{eq:asasyms}
	\end{align}
\subsection{ $ a(s) $ expressed in terms of scattering data}
	\indent Let us define the following function in the region  $ \operatorname{Im}s\ge0 $:
	\begin{align}
		\tilde{a}(s) = \mathrm{e}^{-\mathrm{i}\theta}a(s)\prod_{j=1}^n\frac{s-s_j^*}{s-s_j}.
	\end{align}
	By definition, $ \tilde{a}(s) $ has no zero and no pole in the upper-half plane, and  $ \tilde{a}(s)=1+O(|s|^{-1}) $ for $ |s|\gg1 $. Therefore the function $ \log \tilde{a}(s) $ is analytic and satisfies $ \log\tilde{a}(s)=O(|s|^{-1}) $ for large $ |s| $ in the upper-half plane. From the Cauchy's integral formula, the relation
	\begin{align}
		\log \tilde{a}(s) = \frac{1}{\pi\mathrm{i}}\int_{-\infty}^\infty\!\!\mathrm{d}z\frac{\log|\tilde{a}(z)|}{z-s} \quad (\operatorname{Im}s>0)
	\end{align}
	follows. Rewriting this expression in terms of $ a(s) $ by using the relation $ |\tilde{a}(z)|^2=|a(z)|^2=|t(z)|^{-2}=(1-|r(z)|^2)^{-1} $, we obtain
	\begin{align}
		a(s) = \mathrm{e}^{\mathrm{i}\theta}\prod_{j=1}^n\frac{s-s_j}{s-s_j^*} \exp\left[ \frac{1}{2\pi \mathrm{i}}\int_{-\infty}^\infty\!\!\mathrm{d}z\frac{\log(1-|r(z)|^2)}{s-z} \right]
	\end{align}
	for $ \operatorname{Im}s>0 $. This relation shows how to determine $ a(s) $ from the scattering data. 
	In the limit $ s\rightarrow0 $, we obtain
	\begin{align}
		\mathrm{e}^{2\mathrm{i}\theta}=\prod_{j=1}^n\frac{s_j^*}{s_j} \exp\left[ \frac{1}{2\pi \mathrm{i}}\int_{-\infty}^\infty\!\!\mathrm{d}z\frac{\log(1-|r(z)|^2)}{z} \right]
	\end{align}
	with the use of Eq. (\ref{eq:asasyms}). As a special case, if the reflection coefficient vanishes identically, it reduces to
	\begin{align}
		\mathrm{e}^{2\mathrm{i}\theta}=\prod_{j=1}^n\frac{s_j^*}{s_j},
	\end{align}
	which provides the derivation of Eq. (15) of the main article.
\section{II. Inverse problem}
	In this section, we introduce the integral representation of Jost solutions using the integral kernel $ K(x,y) $. Subsequently, we derive the Gel'fand--Levitan--Marchenko (GLM) equation, that is, the integral equation which determines $ K(x,y) $ from scattering data.
\subsection{Integral representation of Jost solutions}
	Let us assume that the left Jost solution can be expressed in terms of a  $ 2\times 2 $ matrix integral kernel $ K(x,y) $ as follows:
	\begin{gather}
		f_-(x,s)=\begin{pmatrix} 1 \\ s \end{pmatrix}\mathrm{e}^{-\mathrm{i}k(s)x}+\int_{-\infty}^x\!\!\mathrm{d}y K(x,y)\begin{pmatrix} 1 \\ s \end{pmatrix}\mathrm{e}^{-\mathrm{i}k(s)y}, \label{eq:lJostkernelK}
	\end{gather}
where the integrand should not diverge at $ y=-\infty $. If $ K(x,y) $ decreases exponentially in the limit $ y\rightarrow-\infty $, the expression (\ref{eq:lJostkernelK}) is well defined for $ \operatorname{Im}k\ge0 \ \leftrightarrow \ \operatorname{Im}s\ge0 $. By replacement $ s\rightarrow1/s $, we obtain the similar expression for the other left Jost solution
	\begin{align}
		sf_-(x,s^{-1})=\begin{pmatrix} s \\ 1 \end{pmatrix}\mathrm{e}^{\mathrm{i}k(s)x}+\int_{-\infty}^x\!\!\mathrm{d}y K(x,y)\begin{pmatrix} s \\ 1 \end{pmatrix}\mathrm{e}^{\mathrm{i}k(s)y}. \label{eq:lJostkernelK2}
	\end{align}
	This expression, on the other hand, is well defined for $ \operatorname{Im}s\le0  $. We note that both expressions in Eqs.~(\ref{eq:lJostkernelK}) and (\ref{eq:lJostkernelK2}) are simultaneously well defined only when $ s $ is real.\\
	\indent By the same logic of deriving Eq. (\ref{eq:ccf}),
	\begin{align}
		\sigma_1f_-(x,s)^*=s^*f_-(x,s^{-1*})
	\end{align}
	follows, and we obtain the following relation from this relation and 
Eq.~(\ref{eq:lJostkernelK}):
	\begin{align}
		K(x,y)=\sigma_1 K(x,y)^* \sigma_1. \label{eq:Kinvolt}
	\end{align}
	In each component, we obtain $ K_{22}=K_{11}^* $ and $  K_{12}=K_{21}^* $. Thus, only the first column of $ K(x,y) $ is independent, and it has the following form:
	\begin{align}
			K(x,y)=\begin{pmatrix} K_{11}(x,y) & K_{21}(x,y)^* \\ K_{21}(x,y) & K_{11}(x,y)^*  \end{pmatrix}.
	\end{align}

\subsection{Equations for $ K(x,y) $}
	\begin{prop} The integral kernel $ K(x,y) $ satisfies the following equations:
	\begin{gather}
		K(x,x)-\sigma_3K(x,x)\sigma_3= U(x)-M, \label{eq:KtoU} \\
		\frac{\partial K(x,y)}{\partial x}+\sigma_3\left( \frac{\partial K(x,y)}{\partial y}-K(x,y)M \right)\sigma_3-U(x)K(x,y)=0. \label{eq:KtoU2}
	\end{gather}
	Here, we have introduced the following notations:
	\begin{align}
		U(x)&:= \begin{pmatrix} 0 & -\mathrm{i}\Delta(x) \\ \mathrm{i}\Delta(x)^* & 0 \end{pmatrix}, \\
		M&:= U(-\infty)=m\sigma_2.
	\end{align}
	In particular, Eq.~(\ref{eq:KtoU}) can be used to construct the potential $ \Delta(x) $:
	\begin{align}
		\Delta(x)=m+2\mathrm{i}K_{21}(x,x)^*. \label{eq:KtoU3}
	\end{align}
	\end{prop}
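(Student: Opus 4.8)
The plan is to substitute the integral representation (\ref{eq:lJostkernelK}) of the left Jost solution into the ZS equation (\ref{eq:ZSsupp}) and to sort the resulting terms by where they sit: at the endpoint $y=x$, at $y=-\infty$, or under the integral sign. It is convenient to write $\mathcal{L}=-\mathrm{i}\sigma_3\partial_x+P(x)$, where $P(x)$ is the off-diagonal matrix with entries $\Delta(x)$ and $\Delta(x)^*$, and to record the identities $P(x)=\mathrm{i}\sigma_3U(x)$ and $m\sigma_1=\mathrm{i}\sigma_3M$. The reason the uniformizing parametrization (\ref{eq:parametrize}) is the natural variable here is that the free wave $\phi_0(y,s):=(1,s)^T\mathrm{e}^{-\mathrm{i}k(s)y}$ solves the \emph{constant}-$m$ problem, $-\mathrm{i}\sigma_3\partial_y\phi_0+m\sigma_1\phi_0=\epsilon(s)\phi_0$; this lets one replace each $\epsilon\,\phi_0(y)$ produced inside the integral by $-\mathrm{i}\sigma_3\partial_y\phi_0(y)+m\sigma_1\phi_0(y)$ and then integrate the $\partial_y\phi_0$ piece by parts in $y$.

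Carrying out the substitution, the identity $(\mathcal{L}-\epsilon)f_-=0$ breaks into three pieces. (i) The coefficient of $\phi_0(x,s)$ receives contributions from $(\mathcal{L}-\epsilon)\phi_0=(P(x)-m\sigma_1)\phi_0$, from differentiating the upper limit of the integral, and from the endpoint term of the integration by parts; left-multiplying the vanishing sum by $-\mathrm{i}\sigma_3$ and using $\sigma_3^2=\mathbf{1}$ reproduces exactly (\ref{eq:KtoU}). (ii) The boundary terms at $y=-\infty$ vanish, since $K(x,\cdot)$ decays exponentially there by hypothesis. (iii) The integrand collects $-\mathrm{i}\sigma_3\partial_xK$, $P(x)K$, the term $-mK\sigma_1$ coming from the $m\sigma_1$ part of the replacement, and the term $-\mathrm{i}(\partial_yK)\sigma_3$ produced by the integration by parts; left-multiplying by $\mathrm{i}\sigma_3$ and using $\sigma_2\sigma_3=\mathrm{i}\sigma_1$ — which is exactly what converts $-mK\sigma_1$ into $-\sigma_3KM\sigma_3$, the correction that records the nonvanishing background — assembles everything into (\ref{eq:KtoU2}). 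Finally, (\ref{eq:KtoU3}) is just the $(2,1)$ entry of (\ref{eq:KtoU}): $K(x,x)-\sigma_3K(x,x)\sigma_3$ doubles the off-diagonal entries, while $U(x)-M$ has lower-left entry $\mathrm{i}(\Delta(x)^*-m)$, so $2K_{21}(x,x)=\mathrm{i}(\Delta(x)^*-m)$.

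The step that is not pure $2\times2$ algebra is the separation used in (i) and (iii): one ends up with an identity $A(x)\phi_0(x,s)+\int_{-\infty}^xB(x,y)\phi_0(y,s)\,dy=0$ valid for all admissible $s$, and must conclude $A\equiv0$ and then $B\equiv0$. I would fix $x$, substitute $y=x-\xi$ so that the integral becomes $\mathrm{e}^{-\mathrm{i}k(s)x}\int_0^\infty B(x,x-\xi)(1,s)^T\mathrm{e}^{\mathrm{i}k(s)\xi}\,d\xi$, restrict to real $s$, and let $k=k(s)\to\infty$: by Riemann--Lebesgue the integral tends to zero, while $A(x)(1,s)^T$ is linear in $s$, so its linear and constant parts must separately vanish, forcing $A(x)=0$; the integral then vanishes identically in $k$, and Fourier uniqueness on the half-line (for each real $k$) gives $B\equiv0$. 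I expect this completeness/uniqueness argument — rather than any computation — to be the part requiring the most care, and it is where one would invoke the standard inverse-scattering treatment of Ref.~\cite{FaddeevTakhtajan}.
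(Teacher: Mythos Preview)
Your route is correct and is essentially the paper's: substitute the kernel representation into the ZS equation, use the constant-$m$ equation for the free wave to trade $\epsilon$ for $\partial_y$, integrate by parts, and split into an endpoint piece (giving (\ref{eq:KtoU})) and an integrand piece (giving (\ref{eq:KtoU2})). The algebra and the reading-off of (\ref{eq:KtoU3}) are right.

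The one difference is exactly the step you flag. The paper packages \emph{both} left Jost solutions into the $2\times2$ matrix $Z(y,s)$ with columns $(s,1)^T\mathrm{e}^{\mathrm{i}ky}$ and $(1,s)^T\mathrm{e}^{-\mathrm{i}ky}$, obtains a matrix identity, right-multiplies by $Z(x,s)^{-1}$, and argues ``LHS depends only on $x$, hence $\partial_s(\text{RHS})=0$, hence the integrand vanishes.'' That matrix packaging repairs a small gap in your Riemann--Lebesgue argument: your $\phi_0(y,s)=(1,s)^T\mathrm{e}^{-\mathrm{i}ky}$ carries an explicit $s$, and since $k\to\infty$ forces $s\to\infty$, the contribution $s\,\widehat{B_{i2}}(k)$ is not killed by Riemann--Lebesgue alone, so you cannot conclude $A(x)(1,s)^T\to0$ directly. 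The easy fix is to pair $s$ with $-s^{-1}$ (same $k$, linearly independent column vectors), which is precisely what using the full $Z$ does and which decouples the two columns of $A$ and $B$ at each fixed $k$; alternatively, run $s\to0^+$ separately to isolate the $s^0$ parts. With that adjustment your separation argument goes through, and the two proofs coincide.
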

	\begin{proof}In order to make descriptions short, we introduce the following temporary notations:
		\begin{align}
			F(x,s)&=\begin{pmatrix} sf_-(x,s^{-1}) & f_-(x,s) \end{pmatrix}, \\
			Z(x,s)&=\begin{pmatrix} s\mathrm{e}^{\mathrm{i}k(s)x} & \mathrm{e}^{-\mathrm{i}k(s)x} \\ \mathrm{e}^{\mathrm{i}k(s)x} & s\mathrm{e}^{-\mathrm{i}k(s)x} \end{pmatrix}.
		\end{align}
		$ F(x,s) $ and $ Z(x,s) $ satisfy the following ZS eigenvalue problem:
		\begin{align}
			\partial_xF(x,s)&=\Bigl(\mathrm{i}\epsilon(s)\sigma_3+U(x)\Bigr)F(x,s), \label{eq:ljostinproof} \\
			\partial_xZ(x,s)&=\Bigl(\mathrm{i}\epsilon(s)\sigma_3+M\Bigr)Z(x,s). \label{eq:fndinproof}
		\end{align}
		Equations (\ref{eq:lJostkernelK}) and (\ref{eq:lJostkernelK2}) can be expressed as
		\begin{align}
			F(x,s) = Z(x,s)+\int_{-\infty}^x\!\!\mathrm{d}yK(x,y)Z(y,s). \label{eq:kerninproof}
		\end{align}
		Henceforth, we omit the argument $ s $ and simply write them as $ \epsilon=\epsilon(s) $, $ F(x)=F(x,s) $, and $ Z(x)=Z(x,s) $. By differentiating Eq. (\ref{eq:kerninproof}) with respect to $ x $ and using Eq.~(\ref{eq:fndinproof}), we obtain
		\begin{align}
			\partial_xF(x)=(\mathrm{i}\epsilon\sigma_3+M)Z(x)+K(x,x)Z(x)+\!\!\int_{-\infty}^x\!\!\mathrm{d}y\frac{\partial K(x,y)}{\partial x}Z(y). \label{eq:ljostinproof2}
		\end{align}
		Next, let us rewrite the R.H.S. of (\ref{eq:ljostinproof}). It follows that
		\begin{align}
		\begin{split}
			& \mathrm{i}\epsilon\sigma_3\!\!\int_{-\infty}^x\!\!\mathrm{d}yK(x,y)Z(y) \\
			=& \int_{-\infty}^x\!\!\mathrm{d}y\sigma_3K(x,y)\sigma_3(\mathrm{i}\epsilon\sigma_3Z(y)) \\
			=& \int_{-\infty}^x\!\!\mathrm{d}y\sigma_3K(x,y)\sigma_3(\partial_yZ(y)-MZ(y)) \\
			=& \sigma_3 K(x,x)\sigma_3 Z(x)+\int_{-\infty}^x\!\!\mathrm{d}y\sigma_3\left( K(x,y)M-\frac{\partial K(x,y)}{\partial y} \right)\sigma_3Z(y).
		\end{split} \label{eq:ljostinproof6}
		\end{align}
		Here, we have used $ \sigma_3^2=1 $ in the second line, Eq. (\ref{eq:fndinproof}) in the third line, and  $ \sigma_3 M = -M\sigma_3 $ in the last line. Using Eqs. (\ref{eq:kerninproof}) and (\ref{eq:ljostinproof6}), the R.H.S. of (\ref{eq:ljostinproof}) can be rewritten as
		\begin{align}
			&(\mathrm{i}\epsilon\sigma_3+U(x))F(x)\nonumber \\
			=\ &(\mathrm{i}\epsilon\sigma_3+U(x)+\sigma_3K(x,x)\sigma_3)Z(x) \nonumber\\
			&\!+\!\int_{-\infty}^x\!\!\mathrm{d}y\left[\sigma_3\left( K(x,y)M-\frac{\partial K(x,y)}{\partial y} \right)\sigma_3+U(x)K(x,y)\right]Z(y). \label{eq:ljostinproof3}
		\end{align}
		From $ \text{(R.H.S of (\ref{eq:ljostinproof2}))}= \text{(R.H.S of (\ref{eq:ljostinproof3}))} $, we obtain
		\begin{align}
		\begin{split}
			&\sigma_3K(x,x)\sigma_3-K(x,x)+U(x)-M \\
			=&\int_{-\infty}^x\!\!\mathrm{d}y\Biggl[ \frac{\partial K(x,y)}{\partial x}+\sigma_3\left( \frac{\partial K(x,y)}{\partial y}-K(x,y)M \right)\sigma_3\\
			&\qquad\qquad\qquad\qquad -U(x)K(x,y) \Biggr] Z(y,s)Z(x,s)^{-1}. 
		\end{split}\label{eq:ljostinproof4}
		\end{align}
		Here, we again write the $s$-dependence of $ Z(x,s) $ explicitly. 
In Eq.~(\ref{eq:ljostinproof4}), the L.H.S. is a function dependent only on $ x $. On the other hand, the R.H.S. depends on  $ x $ and $ s $. Therefore, the L.H.S. vanishes when we differentiate both sides with respect to $ s $. In order for the relation $ \partial_s\text{(R.H.S.)}=0 $ to hold for any $ x $ and $ s $, the integrand must vanish identically, we thus obtain Eq.~(\ref{eq:KtoU2}). As a result, the L.H.S. of (\ref{eq:ljostinproof4}) also vanishes, and Eq.~(\ref{eq:KtoU}) follows.
	\end{proof}
\subsection{GLM equation}
	\begin{thm} Let  $ s_j \ (j=1,\dots,n) $ be zeros of $ a(s) $ in the upper-half plane. The equation
	\begin{align}
		K(x,y)\begin{pmatrix}1 \\ 0\end{pmatrix}+F(x+y)+\int_{-\infty}^x\!\!\mathrm{d}zK(x,z)F(z+y)=0 \label{eq:GLM}
	\end{align}
	holds for $ y<x $. Here $ F(x) $ is defined by
	\begin{align}
		F(x) &:= F_c(x)+F_d(x), \\
		F_c(x) &:= \frac{m}{4\pi}\int_{-\infty}^\infty\!\!\mathrm{d}s\,r(s)\begin{pmatrix}s^{-1} \\ 1\end{pmatrix}\mathrm{e}^{-\mathrm{i}k(s)x}, \label{eq:dfFc}\\
		F_d(x) &:=\sum_{j=1}^n |b(s_j)|^2 c_j^2\begin{pmatrix} 1 \\ s_j \end{pmatrix}\mathrm{e}^{\kappa_j x}, \label{eq:dfFd}
	\end{align}
	where  $ r(s)=b(s)/a(s) $ is a reflection coefficient,  $ \kappa_j=-\mathrm{i}k(s_j) $ is a complex wavenumber, and $ c_j^2 $ is the normalization constant introduced in Proposition \ref{prop:boundstates}.
	\end{thm}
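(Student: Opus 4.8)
The plan is to derive the GLM equation (\ref{eq:GLM}) by the classical scheme \cite{ZakharovShabat2,FaddeevTakhtajan}: combine the factorization of the scattering relation on the real $s$-axis with the Jost--kernel representation, and then project onto a single Fourier mode by integrating against a suitably chosen measure in the uniformizing variable.

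First I would extract the first column of the transformation formula (\ref{eq:rjosttoljost}), namely $f_+(x,s)=a(s)\,sf_-(x,s^{-1})+b(s)f_-(x,s)$, and divide by $a(s)$ (which is zero-free on $\mathbb{R}$ by property~(iii)) to obtain the scattering relation
\[
 r(s)f_-(x,s)=\frac{f_+(x,s)}{a(s)}-sf_-(x,s^{-1}),\qquad s\in\mathbb{R},\quad r(s):=\frac{b(s)}{a(s)} .
\]
Then, fixing $y<x$, I would multiply both sides by $\frac{m}{4\pi}\,s^{-1}\mathrm{e}^{-\mathrm{i}k(s)y}$ and integrate over $s\in\mathbb{R}$. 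On the left side I insert the kernel representation (\ref{eq:lJostkernelK}) of $f_-(x,s)$; since $\binom{s^{-1}}{1}=s^{-1}\binom{1}{s}$, the plane-wave part reproduces exactly $F_c(x+y)$ and the kernel part reproduces $\int_{-\infty}^x K(x,z)F_c(z+y)\,\mathrm{d}z$, directly from the definition (\ref{eq:dfFc}) of $F_c$.

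The right side contributes two pieces. For $-sf_-(x,s^{-1})$ I insert (\ref{eq:lJostkernelK2}); the resulting $s$-integrals are free Fourier integrals in the uniformizer, and the involution $s\mapsto-s^{-1}$ (which fixes $k(s)$) gives $\frac{m}{4\pi}\int_{\mathbb{R}}\mathrm{e}^{\mathrm{i}k(s)\xi}\,\mathrm{d}s=\delta(\xi)$ and $\frac{m}{4\pi}\int_{\mathbb{R}}s^{-1}\mathrm{e}^{\mathrm{i}k(s)\xi}\,\mathrm{d}s=0$, so that (using $y<x$) the plane-wave part drops and only the kernel part survives, yielding $-K(x,y)\binom{1}{0}$. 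For $f_+(x,s)/a(s)$ I close the contour in the upper half $s$-plane: the factor $\mathrm{e}^{\mathrm{i}k(s)(x-y)}$ produced by $f_+(x,s)\sim\binom{s\mathrm{e}^{\mathrm{i}\theta}}{\mathrm{e}^{-\mathrm{i}\theta}}\mathrm{e}^{\mathrm{i}k(s)x}$ and $a(s)\sim\mathrm{e}^{\mathrm{i}\theta}$ (Eqs.~(\ref{eq:rjostasyms}), (\ref{eq:asasyms})) kills the large arc because $x-y>0$, and the only enclosed singularities are the simple poles of $1/a$ at $s=s_j$ (simplicity from the corollary in Proposition~\ref{prop:boundstates}). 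Summing the residues with $f_+(x,s_j)=b(s_j)f_-(x,s_j)$ (Eq.~(\ref{eq:rjostbound})), the representation (\ref{eq:lJostkernelK}) of $f_-(x,s_j)$, and the normalization identity $\mathrm{i}\dot a(s_j)b(s_j)^*s_j=\tfrac{m}{2}c_j^{-2}$ (Eq.~(\ref{eq:transandnorm})), each residue collapses to the weight $|b(s_j)|^2c_j^2$, so this piece equals $-F_d(x+y)-\int_{-\infty}^x K(x,z)F_d(z+y)\,\mathrm{d}z$ with $F_d$ as in (\ref{eq:dfFd}). Equating both sides and collecting terms, the continuous and discrete parts combine with $F=F_c+F_d$ into $K(x,y)\binom{1}{0}+F(x+y)+\int_{-\infty}^x K(x,z)F(z+y)\,\mathrm{d}z=0$ for $y<x$, which is (\ref{eq:GLM}).

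The hard part is analytic rather than algebraic: the integrands are only conditionally convergent near $s=0$, where $|k(s)|\to\infty$, so I would need to justify interchanging the $s$- and $z$-integrations, indent the contour around $s=0$ before closing it in the upper half-plane, and check that the decay estimates (\ref{eq:rjostasyms})--(\ref{eq:asasyms}) for the Jost solutions and for $a(s)$, together with the fast decay of $r(s)$ as $|\epsilon|\to\infty$, are strong enough for the large-arc integral to vanish and for $F_c$ and the convolutions to be well defined. Once these points are settled, the residue bookkeeping---in particular reading off the exact constant $|b(s_j)|^2c_j^2$ from Proposition~\ref{prop:boundstates} and tracking signs---is routine.
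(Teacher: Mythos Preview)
Your scheme is exactly the paper's---multiply the scattering relation by $\tfrac{m}{2}\,s^{-1}\mathrm{e}^{-\mathrm{i}k(s)y}$, integrate over $s\in\mathbb{R}$, evaluate one side by residues and the other by the free Fourier identities $\tfrac{m}{2}\int\mathrm{e}^{\mathrm{i}k(s)\xi}\,\mathrm{d}s=2\pi\delta(\xi)$ and $\tfrac{m}{2}\int s^{-1}\mathrm{e}^{\mathrm{i}k(s)\xi}\,\mathrm{d}s=0$---and your treatment of the $r(s)f_-$ and residue bookkeeping (via Eqs.~(\ref{eq:rjostbound}) and (\ref{eq:transandnorm})) is correct.

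There is, however, one genuine gap in how you group the terms. You propose to close the contour for $f_+(x,s)/a(s)$ \emph{by itself}. But the very asymptotics you quote give
\[
\frac{f_+(x,s)}{a(s)}\cdot\frac{\mathrm{e}^{-\mathrm{i}k(s)y}}{s}
\;=\;\Bigl[\begin{pmatrix}1\\ s^{-1}\mathrm{e}^{-2\mathrm{i}\theta}\end{pmatrix}+O(|s|^{-1})\Bigr]\mathrm{e}^{\mathrm{i}k(s)(x-y)}\qquad(|s|\gg1),
\]
so the first component is $O(1)$ on the large semicircle and Jordan's lemma does \emph{not} kill the arc; correspondingly the real-axis integral of this piece alone is not even conditionally convergent (it behaves like $\int\mathrm{e}^{\mathrm{i}\frac{m}{2}s(x-y)}\,\mathrm{d}s$). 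The companion divergence sits in the plane-wave part of $-sf_-(x,s^{-1})$, which you evaluate separately via the $\delta$-identity; the two ill-defined pieces only make sense after cancellation. The paper's grouping fixes this cleanly: it moves the bare plane wave $\binom{s}{1}\mathrm{e}^{\mathrm{i}k(s)x}$ from $sf_-(x,s^{-1})$ to the residue side, forming $\tfrac{1}{a(s)}f_+(x,s)-\binom{s}{1}\mathrm{e}^{\mathrm{i}k(s)x}$, which is $O(|s|^{-1})\mathrm{e}^{\mathrm{i}k(s)(x-y)}$ after dividing by $s$, so the arc vanishes and the real-axis integral converges; only the kernel part $\int_{-\infty}^{x}K(x,z)\binom{s}{1}\mathrm{e}^{\mathrm{i}k(s)z}\,\mathrm{d}z$ is left on the Fourier side and yields $K(x,y)\binom{1}{0}$. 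With this single regrouping your argument goes through unchanged and coincides with the paper's.
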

	\indent Before giving the proof, we remark that this theorem does not contain the case of $ y=x $, since the discussion on the convergence becomes rather sensitive in this case. Indeed, if one tries to include the case $ y=x $ in the theorem, one finds that the proof is not valid in a concrete example of the $ n $-soliton solution discussed in the next section. However, we must use the case of $y=x$ in order to construct the potential $ \Delta(x) $, as shown in Eq.~(\ref{eq:KtoU3}). The safest way to overcome this dilemma  is to interpret $ K(x,x) $ as a limiting value, i.e.,  $ K(x,x):=\lim_{y\rightarrow x-0}K(x,y) $. 
	\begin{proof}
	\indent From Eq. (\ref{eq:rjosttoljost}), the equation
	\begin{align}
		\frac{1}{a(s)}f_+(x,s)= sf_-(x,s^{-1})+\frac{b(s)}{a(s)}f_-(x,s)
	\end{align}
holds. 
	By rewriting this by using Eqs.~(\ref{eq:lJostkernelK}) and (\ref{eq:lJostkernelK2}), we obtain
	\begin{align}
	\begin{split}
		&\frac{1}{a(s)}f_+(x,s)-\begin{pmatrix}s \\ 1 \end{pmatrix}\mathrm{e}^{\mathrm{i}k(s)x}\\
		&=\int_{-\infty}^x\!\!\mathrm{d}z K(x,z)\begin{pmatrix}s \\ 1\end{pmatrix}\mathrm{e}^{\mathrm{i}k(s)z}\\
		&\qquad+\frac{b(s)}{a(s)}\left[\begin{pmatrix}1 \\ s\end{pmatrix}\mathrm{e}^{-\mathrm{i}k(s)x}+\int_{-\infty}^x\!\!\mathrm{d}zK(x,z)\begin{pmatrix}1 \\ s\end{pmatrix}\mathrm{e}^{-\mathrm{i}k(s)z}\right].
	\end{split}\label{eq:beforeint}
	\end{align}
	Henceforth, we calculate
	\begin{align}
		\frac{m}{2}\int_{-\infty}^\infty\!\!\mathrm{d}s\frac{\mathrm{e}^{-\mathrm{i}k(s)y}}{s}(\text{Eq. (\ref{eq:beforeint}})) \qquad (y< x). \label{eq:beforeint2}
	\end{align}
	\indent First, let us confirm that this integral has a finite value. From Eq. (\ref{eq:rjostasyms}), the integrand of L.H.S. of Eq. (\ref{eq:beforeint2}) for large and small $ |s| $ can be estimated as follows:
	\begin{align}
	\begin{split}
		&\left[\frac{1}{a(s)}f_+(x,s)-\begin{pmatrix}s \\ 1 \end{pmatrix}\mathrm{e}^{\mathrm{i}k(s)x}\right]\frac{\mathrm{e}^{-\mathrm{i}k(s)y}}{s} \\
		&=\begin{cases} O(|s|^{-1})\times \mathrm{e}^{\mathrm{i}\frac{m}{2}s(x-y)} & (|s|\gg 1)\hphantom{.} \\ O(1)\times\mathrm{e}^{-\mathrm{i}\frac{m}{2s}(x-y)} &(|s|\ll 1). \end{cases}
	\end{split}
	\end{align}
	Therefore, the integration for large  $ |s| $ converges, unless $ x-y=0 $. Though we encounter the rapidly oscillating function around $ s=0 $, the integration gives a finite result.\\
	\indent Let us calculate the L.H.S. of Eq. (\ref{eq:beforeint2}). Since $ \mathrm{e}^{\mathrm{i}k(s)(x-y)} $ decreases exponentially in the upper-half plane, we can use the residue theorem. Here we note that all zeros of $ a(s) $ are simple, as mentioned in Proposition \ref{prop:boundstates}. As usual, considering the contour consisting of real axis and semicircle with radius $ R $ in the upper-half plane, and taking the limit $ R\rightarrow\infty $, we obtain
	\begin{align}
		(\text{L.H.S. of Eq. (\ref{eq:beforeint2})}) &= 2\pi \mathrm{i}\sum_{j=1}^n \frac{m}{2}\frac{1}{\dot{a}(s_j)}s_j^{-1}f_+(x,s_j)\mathrm{e}^{-\mathrm{i}k(s_j)y} \nonumber \\
		&=-2\pi\sum_{j=1}^n |b(s_j)|^2c_j^2 f_-(x,s_j)\mathrm{e}^{\kappa_jy}.
	\end{align}
	Here, we have used Eqs. (\ref{eq:rjostbound}) and (\ref{eq:transandnorm}) to show the second equality. Furthermore, using the expression (\ref{eq:lJostkernelK}) and the definition (\ref{eq:dfFd}), we obtain
	\begin{align}
		\frac{(\text{L.H.S. of Eq. (\ref{eq:beforeint2})})}{2\pi}=-F_d(x+y)-\int_{-\infty}^x\!\!\mathrm{d}zK(x,z)F_d(z+y). \label{eq:LHSeFd}
	\end{align}
	\indent Next, let us consider the R.H.S. of Eq. (\ref{eq:beforeint2}). Let us note the formulae
	\begin{align}
		\frac{m}{2}\int_{-\infty}^\infty\!\!\mathrm{d}s\mathrm{e}^{\mathrm{i}k(s)x}&=2\pi \delta(x), \label{eq:intexpk} \\
		\frac{m}{2}\int_{-\infty}^\infty\!\!\mathrm{d}s\frac{\mathrm{e}^{\mathrm{i}k(s)x}}{s}&=0 \label{eq:intexpk2}
	\end{align}
which can be shown by dividing the integral into two regions,  $ [0,\infty] $ and  $ [-\infty,0] $, and substituting $ \tilde{s}=-s^{-1} $ in the latter integral. 
By using these formulae, the expression 
	\begin{align}
	\begin{split}
		\frac{(\text{R.H.S. of Eq. (\ref{eq:beforeint2})})}{2\pi} &= K(x,y)\begin{pmatrix}1 \\ 0\end{pmatrix}+F_c(x+y)\\
		&\quad+\int_{-\infty}^x\!\!\mathrm{d}zK(x,z)F_c(z+y)
	\end{split}\label{eq:RHSeFc}
	\end{align}
	follows by straightforward calculation. From Eqs. (\ref{eq:LHSeFd}) and (\ref{eq:RHSeFc}), we obtain the theorem. 
	\end{proof}

\section{III. Multi-Soliton Solution and Eigenstates}
	In this section we solve the GLM equation (\ref{eq:GLM}) when the reflection coefficient vanishes identically: $r(s)=0$. All the expressions for the  $ n $-soliton solution and its eigenstates shown in the main article without proof are provided here.
\subsection{$ n $-soliton solution}
	\indent We consider the case where the potential has $ n $ bound states. With letting $ C_j=|b(s_j)|c_j(>0) $, the form of $ F(x) $ is given by
	\begin{align}
		F(x) = \sum_{j=1}^n C_j^2\begin{pmatrix}1 \\ s_j \end{pmatrix}\mathrm{e}^{\kappa_j x}. \label{eq:reflectionlessF}
	\end{align}
	Let $ K_1(x,y) $ be the first column of  $ K(x,y) $. Then the second column can be written as $ \sigma_1 K_1(x,y)^* $ from Eq. (\ref{eq:Kinvolt}). When $ F(x) $ has the form (\ref{eq:reflectionlessF}), the solution of the GLM equation can be obtained by the following ansatz:
	\begin{align}
		K_1(x,y)=\sum_{j=1}^n C_j\begin{pmatrix} f_j(x) \\ s_jf_j(x)^* \end{pmatrix}\mathrm{e}^{\kappa_jy}. \label{eq:ansatzK1}
	\end{align}
	Substituting Eqs. (\ref{eq:reflectionlessF}) and (\ref{eq:ansatzK1}) into Eq. (\ref{eq:GLM}) yields
	\begin{align}
		&\sum_{j=1}^nC_j\mathrm{e}^{\kappa_jy}\begin{pmatrix}F_j(x) \\ s_jF_j(x)^* \end{pmatrix}=0, \label{eq:reflectionlessGLM} \\
		&F_j(x) := f_j(x)+C_j\mathrm{e}^{\kappa_j x}+\sum_{l=1}^n \frac{(1+s_l^{-1}s_j)C_lC_j\mathrm{e}^{(\kappa_l+\kappa_j)x}}{\kappa_l+\kappa_j}f_l(x).
	\end{align}
	By taking the complex conjugate in the second component of Eq.~(\ref{eq:reflectionlessGLM}), we obtain
	\begin{align}
		\sum_{j=1}^nC_j\begin{pmatrix}1 \\ s_j^* \end{pmatrix}\mathrm{e}^{\kappa_jy}F_j(x)=0. \label{eq:reflectionlessGLM2}
	\end{align}
	Since $ (\mathrm{e}^{\kappa_jy}, s_j^*\mathrm{e}^{\kappa_jy})^T \ (j=1,\dots,n) $ are linearly independent of each other, all $ F_j(x) $'s must vanish.\\
	\indent The numerical factor appearing in $ F_j(x) $ can be rewritten as 
	\begin{align}
		\frac{1+s_l^{-1}s_j}{\kappa_l+\kappa_j}=-\frac{2\mathrm{i}}{m}\frac{1}{s_j^{-1}-s_l}.
	\end{align}
	In order to simplify descriptions, we also introduce the following symbol:
	\begin{align}
		 e_j(x)=C_j \mathrm{e}^{\kappa_jx}, \quad (j=1,\dots,n).
	\end{align}
	If we parametrize $ C_j=\sqrt{\kappa_j}\mathrm{e}^{-\kappa_jx_j} $, it is equivalent to Eq. (12) of the main article. The equation $ F_j(x)=0 \ (j=1,\dots,n) $ can be represented as follows:
	\begin{align}
		\begin{pmatrix}f_1 \\ f_2 \\ \vdots \\ f_n \end{pmatrix}+\begin{pmatrix}e_1 \\ e_2 \\ \vdots \\ e_n \end{pmatrix}-\frac{2\mathrm{i}}{m}\begin{pmatrix} \frac{e_1^2}{s_1^{-1}-s_1} & \frac{e_1e_2}{s_1^{-1}-s_2} & \dots & \frac{e_1e_n}{s_1^{-1}-s_n} \\ \frac{e_2e_1}{s_2^{-1}-s_1} & \frac{e_2^2}{s_2^{-1}-s_2} & \dots & \frac{e_2e_n}{s_2^{-1}-s_n} \\ \vdots && \ddots & \vdots \\ \frac{e_ne_1}{s_n^{-1}-s_1} & \frac{e_ne_2}{s_n^{-1}-s_2} & \dots & \frac{e_n^2}{s_n^{-1}-s_n} \end{pmatrix}\begin{pmatrix}f_1 \\ f_2 \\ \vdots \\ f_n \end{pmatrix}=0. \label{eq:NsolitonGLMsupp}
	\end{align}
	Solving this equation, $ K_1(x,y) $ can be obtained as
	\begin{align}
		K_1(x,y)=\sum_{j=1}^n \begin{pmatrix} f_j(x) \\ s_jf_j(x)^* \end{pmatrix}e_j(y),
	\end{align}
	and using this, the potential $ \Delta(x) $ is constructed as
	\begin{align}
		\Delta(x) = m+2\mathrm{i}K_{21}(x,x)^*=m+2\mathrm{i}\sum_{j=1}^n s_j^{-1}e_j(x)f_j(x).
	\end{align}
\subsection{Eigenstates}
	Since $ K_1(x,y) $ is given by Eq.~(\ref{eq:ansatzK1}), the left Jost solution (\ref{eq:lJostkernelK}) can be evaluated as follows:
	\begin{align}
		f_-(x,s)&=\mathrm{e}^{-\mathrm{i}k(s)x}\left[ \begin{pmatrix} 1 \\ s \end{pmatrix} +\sum_{j=1}^n C_j(1+s_j^{-1}s)\begin{pmatrix}f_j(x) \\ s_j f_j(x)^* \end{pmatrix}\frac{\mathrm{e}^{\kappa_j x}}{\kappa_j-\mathrm{i}k(s)} \right] \nonumber \\
		&=\mathrm{e}^{-\mathrm{i}k(s)x}\left[ \begin{pmatrix} 1 \\ s \end{pmatrix} +\frac{2\mathrm{i}}{m}\sum_{j=1}^n \frac{e_j(x)}{s_j-s^{-1}}\begin{pmatrix}f_j(x) \\ s_j f_j(x)^* \end{pmatrix} \right]. \label{eq:ljostforNsol}
	\end{align}
	By replacement $ s\rightarrow1/s $ in Eq. (\ref{eq:ljostforNsol}), we also obtain 
	\begin{align}
		f_-(x,s^{-1})=\mathrm{e}^{\mathrm{i}k(s)x}\left[ \begin{pmatrix} 1 \\ s^{-1} \end{pmatrix} +\frac{2\mathrm{i}}{m}\sum_{j=1}^n \frac{e_j(x)}{s_j-s}\begin{pmatrix}f_j(x) \\ s_j f_j(x)^* \end{pmatrix} \right]. \label{eq:nsolss}
	\end{align}
	It represents the scattering states if $ s $ is real. \\
	\indent The bound states can be obtained by substitution  $ s=s_l \ (l=1,\dots,n) $ in Eq. (\ref{eq:ljostforNsol}). Because of Proposition \ref{prop:boundstates}, the normalization constant of  $ f_-(x,s_l) $ is equal to  $ C_l=|b(s_l)|c_l $. We thus obtain the \textit{normalized} bound states 
	\begin{align}
		-C_lf_-(x,s_l)&=-e_l(x)\left[ \begin{pmatrix} 1 \\ s_l \end{pmatrix} +\frac{2\mathrm{i}}{m}\sum_{j=1}^n \frac{e_j(x)}{s_j-s_l^{-1}}\begin{pmatrix}f_j(x) \\ s_j f_j(x)^* \end{pmatrix} \right] \nonumber \\
		&=\begin{pmatrix} f_l(x) \\ s_lf_l(x)^* \end{pmatrix}. \label{eq:nsolbs}
	\end{align}
	Here, we have used Eq.~(\ref{eq:NsolitonGLMsupp}) to show the second equality. Equations (\ref{eq:nsolss}) and (\ref{eq:nsolbs}) provide the eigenstates shown in the main article without derivation.

\subsection{Positions of solitons}
	Finally, let us derive the approximate expression for positions of solitons when they are well-separated from each other. As in the main article, we use the convention $e_j(x)=\sqrt{\kappa_j}\mathrm{e}^{\kappa_j(x-x_j)}$. For convenience, we prepare the notations for 1-soliton solution with phase shift $ s_1^{-2}=\mathrm{e}^{-2\mathrm{i}\theta_1} $ located at $ x=0 $ and its bound state:
	\begin{align}
	\begin{split}
		\Delta_{\text{1-sol}}(x,s_1)&= m\frac{\mathrm{e}^{\kappa_1x-2\mathrm{i}\theta_1}+\mathrm{e}^{-\kappa_1x}}{\mathrm{e}^{\kappa_1x}+\mathrm{e}^{-\kappa_1x}} \\
		&=m\mathrm{e}^{-\mathrm{i}\theta_1}\left(\cos\theta_1-\mathrm{i}\sin\theta_1\tanh\kappa_1x  \right),
	\end{split} \\
		f_{\text{1-sol}}(x,s_1)&=-\frac{\sqrt{\kappa_1}}{2}\frac{1}{\cosh\kappa_1x}.
	\end{align}
	Our purpose is to show the following Proposition.
	\begin{prop}
	Let us assume that all $ x_j $'s are sufficiently separated from each other. In this situation, we can relabel the indices of solitons so that $ x_1\ll x_2\ll \dotsb \ll x_n $ holds. The approximate position of the $ j $-th soliton $ X_j $ is given as follows:
	\begin{align}
		X_1 &=x_1, \\
		\begin{split}
		X_j &= x_j+\frac{1}{\kappa_j}\sum_{l=1}^{j-1}\log\left|\frac{1-s_ls_j}{s_l-s_j}\right| \\
			&= x_j+\frac{1}{\kappa_j}\sum_{l=1}^{j-1}\log\left|\frac{\sin\frac{\theta_l+\theta_j}{2}}{\sin\frac{\theta_l-\theta_j}{2}}\right| \quad (j\ge2).
		\end{split}
	\end{align}
	The approximate expressions of the potential $ \Delta(x) $ and the $ j $-th bound state near $ x\simeq X_j $ are given by
	\begin{align}
		\Delta(x) &\simeq \mathrm{e}^{-2\mathrm{i}(\theta_1+\dotsb+\theta_{j-1})}\Delta_{\textrm{{\rm 1}-{\rm sol}}}(x-X_j,s_j), \label{eq:solposigap1}\\
		f_j(x) & \simeq \mathrm{e}^{-\mathrm{i}(\theta_1+\dotsb+\theta_{j-1})}\operatorname{sgn}\Biggl(\prod_{l=1}^{j-1}(\theta_l-\theta_j)\Biggr)f_{\textrm{{\rm 1}-{\rm sol}}}(x-X_j,s_j). \label{eq:solposibs1}
	\end{align}
	\end{prop}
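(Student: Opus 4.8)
\emph{Proof plan.} I would analyze the linear system~(\ref{eq:NsolitonGLMsupp}) for $f_1,\dots,f_n$ in the regime $x_1\ll\dots\ll x_n$, localized near the $j$-th soliton, that is, for $x$ within $O(1/\kappa_j)$ of the position~$X_j$ to be identified. In that window $e_l(x)=\sqrt{\kappa_l}\,\mathrm{e}^{\kappa_l(x-x_l)}$ is exponentially large for $l<j$, of order one for $l=j$, and exponentially small for $l>j$, so $f_l=O(1/e_l)$ for $l<j$, $f_j=O(1)$, and $f_l$ is exponentially small for $l>j$. Working with the bounded quantities $g_l:=e_l(x)f_l(x)$, divide the $l$-th row of~(\ref{eq:NsolitonGLMsupp}) by $e_l$ and keep the leading order: the rows with $l<j$ lose their $f_l/e_l=g_l/e_l^2$ term and become the \emph{finite} linear system $\sum_{q=1}^{j}g_q/(s_l^{-1}-s_q)=m/(2\mathrm{i})$ for $l=1,\dots,j-1$; the $l=j$ row keeps the $g_j/e_j^2$ term; and the rows with $l>j$, together with the $q>j$ tails of all sums, drop out.

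Next I would solve the reduced system by rational interpolation. The $j-1$ equations with $l<j$ say that $H(z):=\sum_{q=1}^{j}g_q/(z-s_q)-m/(2\mathrm{i})$ vanishes at $z=s_1^{-1},\dots,s_{j-1}^{-1}$; since the numerator of $H$ is a degree-$j$ polynomial with leading coefficient $-m/(2\mathrm{i})$, this forces $H(z)=-\frac{m}{2\mathrm{i}}\frac{(z-\alpha)\prod_{l<j}(z-s_l^{-1})}{\prod_{q\le j}(z-s_q)}$ for a single parameter $\alpha$, which depends on $x$ through $e_j$. Then $g_q=\operatorname{Res}_{z=s_q}H$ is explicit, and the remaining ($l=j$) equation, which reads $g_j/e_j^2=\frac{2\mathrm{i}}{m}H(s_j^{-1})$, fixes $\alpha$.

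What remains is algebra. Since $\Delta(x)=m+2\mathrm{i}\sum_{q}g_q/s_q$ and $\sum_q g_q/s_q=-H(0)-m/(2\mathrm{i})$, one gets $\Delta\simeq-2\mathrm{i}H(0)=m\,(\alpha/s_j)\prod_{l<j}s_l^{-2}$; solving the $l=j$ equation gives $\alpha/s_j=(s_j^{-2}E^2+1)/(E^2+1)$ with $E^2=(Q_2/Q_1)\mathrm{e}^{2\kappa_j(x-x_j)}$, where $Q_1=\prod_{l<j}\frac{s_j-s_l^{-1}}{s_j-s_l}$ and $Q_2=\prod_{l<j}\frac{s_j^{-1}-s_l^{-1}}{s_j^{-1}-s_l}$. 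By the elementary identities $(s_j^{-1}-s_l^{-1})(s_j-s_l)=4\sin^2\frac{\theta_j-\theta_l}{2}$ and $(s_j^{-1}-s_l)(s_j-s_l^{-1})=4\sin^2\frac{\theta_j+\theta_l}{2}$, the ratio $Q_2/Q_1$ is real and positive, so $E=\mathrm{e}^{\kappa_j(x-X_j)}$ with the claimed $X_j$ (and $X_1=x_1$ is the empty case $j=1$); then $\alpha/s_j=\frac{1}{m}\Delta_{\text{1-sol}}(x-X_j,s_j)$ and $\prod_{l<j}s_l^{-2}=\mathrm{e}^{-2\mathrm{i}(\theta_1+\dots+\theta_{j-1})}$ yield~(\ref{eq:solposigap1}). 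For the bound state, $f_j/e_j=\frac{2\mathrm{i}}{m}H(s_j^{-1})$ simplifies, using $s_j^{-1}-\alpha=(s_j^{-1}-s_j)/(E^2+1)$, to $f_j=-e_jQ_2/(E^2+1)$; trading $e_j$ for $E$ and using $Q_1Q_2=\mathrm{e}^{-2\mathrm{i}(\theta_1+\dots+\theta_{j-1})}$ together with $f_{\text{1-sol}}(x-X_j,s_j)=-\sqrt{\kappa_j}/(2\cosh\kappa_j(x-X_j))$ gives~(\ref{eq:solposibs1}), the remaining real prefactor collapsing to the sign $\operatorname{sgn}\bigl(\prod_{l<j}(\theta_l-\theta_j)\bigr)$ (an overall sign that is immaterial anyway, since $f_j$ enters the later formulas only through $f_j^2$ and $|f_j|^2$).

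I expect the main obstacle to be the first step: making the leading-order reduction rigorous, that is, showing that the $g_l/e_l^2$ terms for $l<j$, the rows with $l>j$, and the $q>j$ tails are $o(1)$ \emph{uniformly} for $x$ in the $j$-th window, with errors controlled by $\mathrm{e}^{-c\min_l(x_{l+1}-x_l)}$; this is the genuine analytic content. The subsequent collapse of the products into the $\sin$-expressions and the tracking of phases is routine but delicate --- it is easy to misplace a factor $\kappa_j/m=\sin\theta_j$ or an overall $(-1)^{j-1}$.
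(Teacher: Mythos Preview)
Your reduction is exactly the paper's: drop the rows and columns with $l>j$, and in the rows $l<j$ drop the bare $f_l$ (equivalently $g_l/e_l^2$) term, leaving the same $j\times j$ Cauchy-type system the paper writes as Eq.~(\ref{eq:NsolitonGLM2}). Where you differ is in how you solve that system. The paper simply writes down the closed form~(\ref{eq:solposi01}) for $f_l$ in terms of the auxiliary $t_j(x)=e_j(x)\prod_{l<j}\frac{s_l-s_j}{1-s_ls_j}$ and verifies it by direct substitution together with the Lagrange-interpolation identity~(\ref{eq:corlagrange}); it then reads off $\Delta$ and $f_j$ from the $l=j$ row. Your route via the rational function $H(z)=\sum_{q\le j}g_q/(z-s_q)-m/(2\mathrm{i})$ is a cleaner derivation of the same answer: the $j-1$ ``saturated'' rows fix all but one zero of the numerator, the remaining row fixes $\alpha$, and $\Delta=-2\mathrm{i}H(0)$ and $f_j=e_j\cdot\frac{2\mathrm{i}}{m}H(s_j^{-1})$ fall out without having to guess the solution or invoke~(\ref{eq:corlagrange}) explicitly. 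Your $E^2=(Q_2/Q_1)\mathrm{e}^{2\kappa_j(x-x_j)}$ coincides with the paper's $t_j^2/\kappa_j$, and your $Q_2\sqrt{Q_1/Q_2}$ prefactor in $f_j$ is exactly the paper's $(\prod_{l<j}s_l^{-1})\cdot\prod_{l<j}\frac{s_l-s_j}{1-s_ls_j}$, so the two computations match term by term (up to the immaterial overall sign you already flagged). In short: same reduction, equivalent algebra; your $H(z)$ packaging trades the paper's verify-by-substitution step for a short interpolation argument.
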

	\begin{proof}
		Since the bound states are localized to each of the solitons, all $ f_l(x) $'s except for $ f_j(x) $ are negligible near the $ j $-th soliton. Furthermore, $ e_l(x) \ (l\ge j+1) $ are also negligible since $ x_j-x_l\ll0 $. Thus, Eq.~(\ref{eq:NsolitonGLMsupp}) approximately reduces to the following equation near the $ j $-th soliton :
	\begin{align}
		\begin{pmatrix}0 \\ \vdots \\ 0 \\ f_j \end{pmatrix}+\begin{pmatrix}e_1 \\ e_2 \\ \vdots \\ e_j \end{pmatrix}-\frac{2\mathrm{i}}{m}\begin{pmatrix} \frac{e_1^2}{s_1^{-1}-s_1} & \frac{e_1e_2}{s_1^{-1}-s_2} & \dots & \frac{e_1e_j}{s_1^{-1}-s_j} \\ \frac{e_2e_1}{s_2^{-1}-s_1} & \frac{e_2^2}{s_2^{-1}-s_2} & \dots & \frac{e_2e_j}{s_2^{-1}-s_j} \\ \vdots && \ddots & \vdots \\ \frac{e_je_1}{s_j^{-1}-s_1} & \frac{e_je_2}{s_j^{-1}-s_2} & \dots & \frac{e_j^2}{s_j^{-1}-s_j} \end{pmatrix}\begin{pmatrix}f_1 \\ f_2 \\ \vdots \\ f_j \end{pmatrix}=0. \label{eq:NsolitonGLM2}
	\end{align}
	Here, we note that even though $ f_l(x) $ for $ l<j $  is exponentially small, the product $ f_l(x) e_l(x) $ must be kept in the above equation because $ e_l(x) $ increases exponentially and $ f_l(x) e_l(x) \sim O(1) $ in the limit $ x\rightarrow+\infty $. The solution of Eq. (\ref{eq:NsolitonGLM2}) is given by
	\begin{align}
		\!\!f_l(x)\!= \frac{1}{e_l(x)}\frac{\prod_{q=1}^{j-1}(s_q^{-1}\!-\!s_l)}{\prod_{p=1,p\ne l}^j(s_p\!-\!s_l)}\left( \frac{m(s_j^{-1}\!-\!s_l)}{2\mathrm{i}}+\frac{\kappa_j^2}{\kappa_j+t_j(x)^2}\! \right) \label{eq:solposi01}
	\end{align}
	for $ l=1,\dots,j $, where we have defined $ t_j(x) $ as
	\begin{align}
		t_j(x):=&\,e_j(x)\prod_{l=1}^{j-1}\frac{s_l-s_j}{1-s_ls_j}\nonumber \\
		=&\sqrt{\kappa_j}\mathrm{e}^{\kappa_j(x-X_j)}\operatorname{sgn}\Biggl(\prod_{l=1}^{j-1}(\theta_l-\theta_j)\Biggr). \label{eq:solposit}
	\end{align}
	We can confirm that Eq. (\ref{eq:solposi01}) solves Eq. (\ref{eq:NsolitonGLM2}) by direct substitution and using the following identity:
	\begin{align}
		\sum_{l=1}^j\frac{\prod_{q=1}^k(y_q-x_l)}{\prod_{p=1,p\ne l}^j(x_p-x_l)}=\begin{cases} 1 & (k=j-1) \\ 0 & (0\le k< j-1). \end{cases} \label{eq:corlagrange}
	\end{align}
	Here, $ x_1,\dots, x_j $ are complex numbers different from each other and $ y_1,\dots,y_k $ are arbitrary complex numbers. We note that Eq. (\ref{eq:corlagrange}) is a corollary of the Lagrange interpolation formula 
	\begin{align}
		\prod_{q=1}^k(x-y_q)=\!\sum_{l=1}^j\prod_{q=1}^k(x_l-y_q)\!\prod_{p=1,p\ne l}^j\frac{x-x_p}{x_l-x_p} \quad (0\le k\le j-1).
	\end{align}
	Comparing the coefficient of $ x^{j-1} $, we obtain Eq.~(\ref{eq:corlagrange}).\\ 
	\indent When $ l=j $, Eq. (\ref{eq:solposi01}) is simplified as
	\begin{align}
		f_j(x)= -\Biggl(\prod_{p=1}^{j-1}s_p^{-1}\Biggr)\frac{\kappa_jt_j(x)}{\kappa_j+t_j(x)^2}. \label{eq:solposibs2}
	\end{align}
Using Eqs. (\ref{eq:solposi01}) and (\ref{eq:corlagrange}), $ \Delta(x) $ can be obtained as
	\begin{align}
		\Delta(x) = m+2\mathrm{i}\sum_{l=1}^js_l^{-1}e_lf_l = m\,\Biggl(\prod_{p=1}^{j-1}s_p^{-2}\Biggr)\frac{\kappa_j+s_j^{-2}t_j^2}{\kappa_j+t_j^2}. \label{eq:solposigap2}
	\end{align}
	Equations (\ref{eq:solposibs2}) and (\ref{eq:solposigap2}) with (\ref{eq:solposit}) are equivalent to Eqs. (\ref{eq:solposigap1}) and (\ref{eq:solposibs1}), and these expressions imply that  $ X_j $ represents  the position of the $ j $-th soliton.
	\end{proof}

\end{document}